\documentclass[11pt,onecolumn]{article}

\usepackage[T1]{fontenc}
\usepackage[utf8]{inputenc}
\usepackage{lmodern}
\usepackage{microtype}
\usepackage{amsmath,amssymb,amsthm,mathtools}
\usepackage{bm}
\usepackage{graphicx}
\usepackage{float}
\usepackage{xcolor}
\usepackage{tikz}
\usetikzlibrary{shapes,arrows,positioning,calc}
\usepackage{booktabs}
\usepackage{enumitem}
\usepackage[margin=1.25in]{geometry}
\usepackage{setspace}
\usepackage{xspace}
\usepackage{url}
\usepackage[colorlinks=true,allcolors=blue]{hyperref}
\usepackage[nameinlink,noabbrev]{cleveref}

\newtheorem{theorem}{Theorem}
\newtheorem{lemma}[theorem]{Lemma}

\newtheorem{corollary}[theorem]{Corollary}
\theoremstyle{definition}

\newcommand{\br}[1]{\mathopen{}\left( #1 \right)}
\newcommand{\brc}[1]{\mathopen{}\left\{ #1 \right\}}
\newcommand{\spr}[1]{\mathopen{}\left| #1 \right|}

\newcommand{\cov}{\operatorname{cov}}

\newcommand{\NP}{\textnormal{NP}\xspace}
\newcommand{\Pclass}{\textnormal{P}\xspace}

\newcommand{\OPT}{\text{OPT}}
\newcommand{\pathbt}{\text{path}}

\newcommand{\cost}{\text{cost}}
\newcommand{\COSTW}{\text{cost}^W}
\newcommand{\COSTA}{\text{cost}^A}

\newcommand{\argmax}{\mathopen{}\operatorname*{arg\,max}}

\newcommand{\cH}{\mathcal{H}}

\newcommand{\cS}{\mathcal{S}}
\newcommand{\cT}{\mathcal{T}}
\newcommand{\cC}{\mathcal{C}}
\newcommand{\cF}{\mathcal{F}}
\newcommand{\cU}{\mathcal{U}}
\newcommand{\cR}{\mathcal{R}}
\newcommand{\cI}{\mathcal{I}}
\newcommand{\cJ}{\mathcal{J}}
\newcommand{\cG}{\mathcal{G}}

\newcommand{\cO}{\mathcal{O}}

\newcommand{\cP}{\mathcal{P}}

\title{Simpler Logarithmic Approximation Algorithms for the Optimal Decision Tree and Adaptive Set Cover}
\author{
  Michał Szyfelbein
}
\date{\today}

\begin{document}
\maketitle

\begin{abstract}
We study a well-known task of constructing a decision tree identifying an unknown hypothesis from a given ground set of hypotheses under both the average- and worst-case cost. The \textsc{Optimal Decision Tree} problem has been extensively studied in the literature and $\mathcal{O}(\log n)$-approximation guarantees are known for both cost criteria (here $n$ is the number of hypotheses). Although the algorithms achieving this approximation ratio are usually relatively simple, their analysis often turns out to be quite technical.

Hereby, we show a new algorithm with a simplified analysis which simultaneously achieves an $\mathcal{O}(\log n)$-approximation for both cost criteria. Moreover, the leading constant under the $\mathcal{O}$-notation in the approximation ratio is relatively small, below $3.65$ for the worst-case cost and twice as much for the average-case cost, assuming base $2$ logarithm.
Our algorithm for the \textsc{Optimal Decision Tree} behaves greedily with respect to the hereby introduced \textsc{Separating Subfamily} problem which asks for the cheapest subfamily of tests which partitions the hypotheses into pieces of small enough size. We show that the \textsc{Separating Subfamily} can itself be reduced to an instance of the well-known \textsc{Maximum Coverage} problem. At the core of our approach lies exploiting properties of cutting a clique into small pieces, where edges represent pairs of hypotheses to be separated. As an application of our result we also provide an $\mathcal{O}(\log (k\cdot n))$-approximation for the \textsc{Adaptive Set Cover} problem, where $k$ is the number of possible realizations.
\end{abstract}

\noindent\textbf{Keywords:} Decision Tree, Approximation Algorithms, Maximum Coverage, Separating Subfamily, Adaptive Set Cover.

\section{Introduction}
The \textsc{Optimal Decision Tree} is a central problem in computer science, laying at the intersection of combinatorial optimization and machine learning. 
Consider the task of identifying an unknown target hypothesis $h^*$ from an a-priori known ground set of $n$ hypotheses~$\cH$, by performing a sequence of tests from a set $\cT$ of size $m$. Each test $t\in \cT$ is a partition of hypotheses into subsets and has an associated cost $c\br{t}$. Performing a test $t$ incurs a cost of~$c\br{t}$ and reveals the part of $t$ containing $h^*$. The hidden hypothesis $h^*$ is identified whenever it is the only remaining one consistent with the outcomes of the already executed tests. Since these tests are performed adaptively, any test execution strategy can be encoded as a decision tree, where internal nodes correspond to tests, edges represent possible responses, and leaves correspond to possible hypotheses. The task of constructing such a tree is known as the \textsc{Optimal Decision Tree} (ODT) problem and has been extensively studied in the literature. The two most commonly considered quality measures of a decision tree are the worst-case and average-case depth, corresponding to the maximum and expected cost of identifying a hypothesis, respectively. In this work, we give a simple greedy algorithm that simultaneously achieves an $\cO\br{\log n}$-approximation for both of these objectives. 

Most heuristics for the ODT problem behave greedily with respect to some naturally defined subproblem. For example, they may select the test that separates the hypotheses most evenly according to a given measure of evenness. Importantly, these strategies are usually concerned with choosing a single test at a time. Since there may be no test that significantly reduces the size of the instance, proving that such a rule achieves a nontrivial approximation ratio often requires careful analysis. To circumvent this inconvenience, we introduce a greedy strategy that selects a subfamily of tests rather than a single one. The intuition behind the performance guarantee of our algorithm can be summarized in a single sentence: we pay $O\br{\OPT}$ to split the hypotheses into relatively small subsets and recurse on each subset. This leads to a logarithmic number of recursion levels and, consequently, an $O\br{\log n}$-approximation ratio.

The main technical contribution required to obtain our result is a constant-factor approximation algorithm for the \textsc{Separating Subfamily}, which is a natural subproblem for the ODT. To do so, we use a further reduction to the well-known \textsc{Maximum Coverage} problem. Our construction crucially exploits the properties of partitioning a clique into small pieces, whose edges represent pairs of hypotheses that need to be separated. Luckily, even here the analysis is relatively straightforward and comprehensive. As an additional result, we show a new application of the ODT, by using it to design a simple $\cO\br{\log \br{k\cdot n}}$-approximation algorithm for the \textsc{Adaptive Set Cover} problem.

\subsection{Related Work}

Hereby we list several most important contributions concerning the ODT as well as \textsc{Set Cover} and \textsc{Adaptive Submodularity}, two closely related areas.

\paragraph*{Optimal Decision Tree}
Decision tree construction belongs to the most well-known topics in computer science. The problem has been widely studied since 1970s starting with Garey and Graham \cite{OptimalBinaryIdentificationProcedures,PerformanceBoundsOnTheSplittingAlgorithmForBinaryTesting}, and since then it has been almost completely characterized with regards to both quality measures. The concrete variant depends on whether the tests are binary, the costs are uniform, and when the average-case cost is studied, whether the probability distribution over the hypotheses is uniform. All of these setups are known to be NP-hard \cite{OptimalBinaryIdentificationProcedures,ConstructingOptimalBinaryDecisionTreesIsNPComplete}. Moreover, they cannot be approximated within an $o\br{\log n}$ factor \cite{OnTheHardnessOfTheMinimumHeightDecisionTreeProblem,DecisionTreesForEntityIdentificationApproximationAlgorithmsAndHardnessResults, DiagnosisDeterminationDecisionTreesOptimizingSimultaneouslyWorstAndExpectedTestingCost}, except the average-case cost minimization with a constant number of outcomes, uniform costs and probabilities. Multiple works regarding approximating various special cases of the problem have arisen. For all of these settings, an $\cO\br{\log n}$-approximation algorithms exist, including \cite{OnAnOptimalSplitTreeProblem, AnalysisOfAGreedyActiveLearningStrategy, ApproximatingDecisionTreesWithMultiwayBranches, Average-caseActiveLearningWithCosts, ApproximatingOptimalBinaryDecisionTrees,DecisionTreesForEntityIdentificationApproximationAlgorithmsAndHardnessResults,DiagnosisDeterminationDecisionTreesOptimizingSimultaneouslyWorstAndExpectedTestingCost, ApproximationAlgorithmsForOptimalDecisionTreesAndAdaptiveTSPProblems, AdaptiveSubmodularRankingAndRouting, AdaptivityInAdaptiveSubmodularity, MinimumCostAdaptiveSubmodularCover, ASimpleApproximationAlgorithmForOptimalDecisionTree} for the average-case and \cite{DecisionTreesForGeometricModels,DiagnosisDeterminationDecisionTreesOptimizingSimultaneouslyWorstAndExpectedTestingCost,TheCostComplexityOfInteractiveLearning} for the worst-case. One can always combine the algorithms for the average- and worst-case cost to achieve asymptotically the same approximation guarantees for both quality measures \cite{TradingOffWorstAndExpectedCostInDecisionTreeProblems}. Notably, for the average-case minimization, when probabilities and costs are uniform and the number of possible responses is a fixed constant, there exists a greedy algorithm which achieves an $\cO\br{\log n/\log\log n}$-approximation and this analysis is tight~\cite{ATightAnalysisOfGreedyYieldsSubexponentialTimeApproximationForUniformDecisionTree}. Moreover, the authors also give a subexponential-time, constant-factor approximation algorithm, implying that it is not NP-hard to achieve such a guarantee under the \textsc{Exponential Time Hypothesis}. 

Recently, the problem has also been generalized to a setting where a partial ordering on the tests is given, imposing precedence constraints among tests, and the goal is to find a decision tree that respects this ordering \cite{Precedence-ConstrainedDecisionTreesAndCoverings}. Depending on the type of the partial ordering the approximability changes drastically. For inforests, an $\cO\br{\log n}$-approximation is possible. For outforests an $\cO\br{\log^2n}$-approximation algorirthm exists, which for the worst-case criterion is best possible unless $NP\subseteq ZTIME\br{n^{\text{poly}\br{\log n}}}$. For general POSETs the approximation ratio given is $\cO^*\br{m^{1/2}}$, contrasted by inapproximability factor of $o\br{m^{1/12-\epsilon}}$ conditioned on the \textsc{Planted Dense Subgraph Conjecture}. These results hold for both cost criterions, any number of outcomes and uniform costs and probabilities.

\paragraph*{Set Cover}
The \textsc{Set Cover} is a classical combinatorial optimization problem, and it is well known that it admits an $H_n$-approximation algorithm \cite{AGreedyHeuristicForTheSetCoveringProblem}. Moreover, this guarantee is essentially tight, since achieving an $\br{1-o\br{1}}\cdot\ln n$-factor guarantee is impossible unless $\textup{P}=\textup{NP}$~\cite{AnalyticalApproachToParallelRepetition}. There exists also a setup called \textsc{Maximum Coverage} where an additional budget constraint is given and the goal is to maximize the number of covered elements, within this budget. This task admits an $1-1/e$-approximation algorithm which is best possible unless $\textup{P}=\textup{NP}$ \cite{TheBudgetedMaximumCoverageProblem,AnalyticalApproachToParallelRepetition}. All those setups can be generalized by the \textsc{Submodular Optimization} framework, where the goal is minimize/maximize a submodular function subject to some constraints. Submodular functions capture the law of diminishing returns, i.~e., the marginal gain of adding an element to a set never increases as it grows. The \textsc{Maximum Coverage} problem can be generalized by the \textsc{Submodular Optimization under Knapsack Constraint} \cite{ANoteOnMaximizingASubmodularSetFunctionSubjectToAKnapsackConstraint}. The \textsc{Set Cover} can be generalized by the \textsc{Submodular Cover} \cite{AnAnalysisOfTheGreedyAlgorithmForTheSubmodularSetCoveringProblem}. The list of all contributions regarding submodular optimization lays beyond scope of this work. Another important variant is the \textsc{Min-Sum Set Cover}, where one minimizes the sum of covering times of elements, which has a $4$-approximation algorithm, and cannot be approximated within $(4-\epsilon)$-factor unless $\textup{P}=\textup{NP}$ \cite{ApproximatingMinSumSetCover}, for any $\epsilon>0$.

\paragraph*{Adaptive Submodularity}
The \textsc{Adaptive Submodularity}~\cite{AdaptiveSubmodularityANewApproachToActiveLearningAndStochasticOptimization} can be viewed as an adaptive/stochastic generalization of the \textsc{Submodular Cover}. In this framework, one optimizes a submodular function over partial realizations revealed by sequential testing. Both ODT and covering can be expressed as adaptive covering tasks, unifying these problems under a common framework. The work on \textsc{Adaptive Submodular Ranking}~\cite{AdaptiveSubmodularRankingAndRouting} provides logarithmic guarantees and yields, in particular, an $\cO\br{\log n}$ approximation for the average-case ODT through a reduction; moreover, results on the \textsc{Adaptive Cover} such as~\cite{AdaptivityInAdaptiveSubmodularity,MinimumCostAdaptiveSubmodularCover} give improved guarantees for the average-case ODT (e.g., $2\cdot \ln\br{1/p_{\min}}$ for uniform costs and $4\cdot\ln n$ for uniform probabilities), while~\cite{MinimumCostAdaptiveSubmodularCover} also proves a general $4\cdot \br{1+\ln Q}$ bound for the \textsc{Adaptive-Submodular Cover} (for a goal value of $Q$).
\section{Problem Setup}\label{sec:problem_setup}
We are given a set $\cH$ of $n$ \emph{hypotheses} and a set $\cT$ of $m$ \emph{tests}, associated with a \emph{cost function} $c\colon \cT\to \mathbb{N}$. Each test $t\in\cT$ is a partition of $\cH$ into disjoint subsets called \emph{responses}. Among $\cH$ there is a \emph{hidden hypothesis} $h^*$ which is required to be identified. Performing a test $t\in \cT$ incurs a cost of $c\br{t}$ and reveals the partition of $t$ containing $h^*$.
A test $t\in\cT$ is said to \emph{separate} a pair of hypotheses $h, h'\in \cH$ if they belong to different parts of $t$. We assume that for each such pair, there exists a test $t\in \cT$ separating them. 
A decision tree $D$ is a rooted tree where each internal node is labeled with a test $t\in \cT$ and each edge from an internal node to its child is encodes a possible response to the corresponding test. Each leaf of the tree is labeled with a hypothesis $h\in \cH$ and corresponds to the fact that $h$ was identified to be $h^*$, i.~e., separated from the remainder of $\cH$. Let $\pathbt_D\br{v}$ denote the set of tests in the unique path from the root to a vertex $v$. The \emph{identification cost} of a vertex $v$ in $D$, is defined as $\cost\br{D, v}=\sum_{t\in \pathbt_D\br{v}}c\br{t}$. The \emph{average-case identification cost}\footnote{
Note, that for the sake of convenience we do not divide this measure by $n$.} of $D$ is defined as 

\[\COSTA\br{D}=\sum_{h\in \cH} \cost\br{D, h}.\]
The \emph{worst-case identification cost} of $D$ is defined as 
\[
\COSTW\br{D}=\max_{h\in \cH} \brc{\cost\br{D, h}}.
\]
The goal of the \textsc{Optimal Decision Tree} (ODT) problem is to find a decision tree $D$ that minimizes either $\COSTA$ or $\COSTW$. We denote an instance of this problem by $\cI=\br{\cH,\cT,c}$. We use $\OPT^W(\cI)$ and $\OPT^A(\cI)$ to denote the optimal worst-case and average-case costs of $\cI$, respectively. For $\cH'\subseteq\cH$, we write $\cI_{\cH'}=\br{\cH',\cT,c}$.

 As a key subproblem we will require the \textsc{Separating Subfamily} problem: We are given a universe $\cU$ of $n$ elements, a family of $m$ partitions $\cT$ of $\cU$ with a cost function $c\colon \cT\to \mathbb{N}$, and a real number $f\in [1/n,1]$. For a subfamily $\cF\subseteq\cT$ we write $c\br{\cF}=\sum_{\cP\in\cF} c\br{\cP}$. We say that
 a partition $\cP\in \cT$ \textit{separates} $u,v\in \cU$ if $u$ and $v$ belong to different parts of~$\cP$. Let $\cP_{\cF}$ be the partition of $\cU$ induced by~$\cF$, where two elements belong to the same part iff no member of~$\cF$ separates them. In our instance $\cI=\br{\cU, \cT, c, f}$, we demand that $\cP_{\cT}=\brc{\brc{u}\colon u\in \cU}$. The goal of the \textsc{Separating Subfamily} problem is to find a subfamily $\cF\subseteq \cT$ minimizing $c\br{\cF}$, such that for each $P\in \cP_{\cF}$, $|P|\leq f\cdot |\cU|$. Denote by $\OPT_{SF}\br{\cI}$ the optimum \textsc{Separating Subfamily} cost on $\cI$.
 For any $\alpha,\beta\geq 1$ we say that an algorithm is a bicriteria $\br{\alpha, \beta}$-approximation algorithm for the \textsc{Separating Subfamily} if it outputs a subfamily $\cF\subseteq\cT$, such that $c\br{\cF}\leq \alpha\cdot \OPT_{SF}\br{\cI}$, and for each $P\in \cP_{\cF}$, $|P|\leq \frac{f}{\beta}\cdot |\cU|$. 

We also use the \textsc{Maximum Coverage} problem. We are given a universe $\cU$ of $n$ elements, a family $\cS$ of $m$ covering subsets of $\cU$ (where $\bigcup_{S\in\cS}S=\cU$), with a cost function $c\br{S}\colon \cS\to \mathbb{N}$ and a budget $B\in \mathbb{N}$. For any $\cC\subseteq \cS$, called \emph{cover}, let $\cov\br{\cC}=\bigcup_{S\in \cC} S$ be the set of elements covered by $\cC$. The goal is to maximize $\spr{\cov\br{\cC}}$ over $\cC\subseteq \cS$, subject to $c\br{\cC}\leq B$. For an instance $\cI=\br{\cU, \cS, c, B}$, let $\OPT_{MC}\br{\cI}$ be the optimum \textsc{Maximum Coverage} value on~$\cI$.

Given a simple graph $G=\br{V, E}$, by a \emph{triangle cut} we mean any non-empty subset of edges $S\subseteq E$ which obeys the triangle inequality. We say that a set of edges $S$ satisfies the \emph{triangle inequality} if for every three vertices $u,v,w\in V$ such that $uv, uw, vw\in E$, if $uv\in S$, then either $uw\in S$ or $vw\in S$ (or both). For any $S\subseteq E$, we denote by $G\setminus S$ the set of connected components occuring after the removal of edges contained in~$S$, from $G$. For any family of edge sets $\cS$, (where $S\subseteq E$ for any $S\in\cS$), we write $E_{\cS}=\bigcup_{S\in \cS} S$.
\section{The algorithm}

We begin by showing how to use an $\br{\alpha, \beta}$-approximation algorithm for the \textsc{Separating Subfamily} to construct an $\cO\br{\alpha\cdot\log_{2/\beta} n}$-approximate decision tree for both cost criteria.
The decision tree built by our algorithm is given implicitly, since for clarity we state the procedure as a strategy of performing tests. It is straightforward to convert this strategy into an explicit decision tree, call it $D$, by performing a depth-first traversal of the recursion tree for all possible outcomes of the tests. The algorithm is as follows:
\begin{enumerate}
    \item If $\spr{\cH}=1$, then the hidden hypothesis is identified and the algorithm terminates.
    \item Otherwise, use the $\br{\alpha, \beta}$-approximation algorithm for the \textsc{Separating Subfamily} instance $\cI_{SF}=\br{\cU=\cH, \cT, c, f=1/2}$ to find an $\br{\alpha, \beta}$-approximate subfamily $\cF\subseteq \cT$ that separates $\cH$ into pieces of size at most $\beta\cdot f\cdot |\cH| = \beta\cdot |\cH|/2$.\label{step:find_family}
    \item Perform any test $t\in \cF\cap \cT$ on the current candidate set $\cH$ and observe the outcome~$P$ (the one which contains the hidden hypothesis $h^*$). \label{step:perform_test}
    \item If $\spr{P}\leq \beta\cdot \spr{\cH}/2$, then recurse on $\cI_{P}$.
    \item Otherwise, repeat Step \ref{step:perform_test} on $\cH \gets P$ and the set of tests $\cT\gets \cT\setminus t$.
\end{enumerate}

\begin{theorem}\label{thm:main_algorithm}
    For an instance $\cI=\br{\cH, \cT, c}$ of the \textsc{Optimal Decision Tree} problem, and the implicitly constructed decision tree $D$, we have that
    \[
    \COSTW\br{D}\leq \alpha\cdot\log_{2/\beta} n\cdot \OPT^W\br{\cI} \quad \text{and}\quad
    \COSTA\br{D}\leq 2\alpha\cdot\log_{2/\beta} n\cdot \OPT^A\br{\cI}.
    \]
\end{theorem}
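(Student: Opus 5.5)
Our approach is to compare, on every recursion level, the cost paid by $D$ with the optimum of the current sub-instance, and then sum over levels. Call a \emph{phase} one execution of Step~\ref{step:find_family} on a candidate set $\cH'$, followed by the tests of Step~\ref{step:perform_test} performed until the candidate set has size at most $\beta|\cH'|/2$. Each phase shrinks the candidate set by a factor of at least $\beta/2$. Hence every root-to-leaf path of $D$ passes through at most $\log_{2/\beta} n$ phases, up to rounding, which we absorb or handle by noting that sizes are integers and the last phase starts with at least $2$ hypotheses.

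\textbf{Step 1: the cost of one phase.} Within a phase, the tests performed along any single path are distinct members of $\cF$. So the cost a hypothesis pays during the phase is at most $c(\cF)\le \alpha\cdot\OPT_{SF}(\cI_{SF})$. We must check that the phase terminates. Since $\cF$ splits $\cH'$ into pieces of size at most $\beta|\cH'|/2$, once every test of $\cF$ has been applied the surviving part has that size. So we exit at the latest after exhausting $\cF$.

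\textbf{Step 2: the key lemma.} For every $\cH'\subseteq\cH$ we show $\OPT_{SF}(\cH',\cT,c,1/2)\le \OPT^W(\cI_{\cH'})\le\OPT^W(\cI)$. Take an optimal worst-case tree for $\cI_{\cH'}$ and follow its \emph{heavy path}: from the root, always go to the child containing more than $|\cH'|/2$ hypotheses, as long as such a child exists. Let $\cF$ be the set of tests on this path. Every hypothesis not on the heavy side is separated from the heavy side at some node of the path. Therefore every part of $\cP_\cF$ either branched off at a node where the heavy child was of size greater than half (so the part has size less than half), or is the final node, which has size at most $|\cH'|/2$. Thus $\cF$ is feasible, and $c(\cF)$ is the cost of a root-leaf path, at most $\OPT^W$. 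Monotonicity $\OPT^W(\cI_{\cH'})\le\OPT^W(\cI)$ holds by restricting a tree to $\cH'$. Combining with Step~1, each phase costs at most $\alpha\OPT^W(\cI)$ per path, and summing over at most $\log_{2/\beta}n$ phases gives the worst-case bound.

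\textbf{Step 3: average case.} Here we sum over hypotheses. The phases at a fixed recursion depth act on disjoint candidate sets $\cH'$, and within a phase each $h\in\cH'$ pays at most $\alpha\,\OPT_{SF}(\cH')$. So it suffices to show $|\cH'|\cdot\OPT_{SF}(\cH')\le 2\,\OPT^A(\cI_{\cH'})$ and $\sum_{\cH'}\OPT^A(\cI_{\cH'})\le\OPT^A(\cI)$ over disjoint $\cH'$. The second holds by restricting the optimal tree. For the first we again use the heavy path, now in an average-optimal tree for $\cH'$. Every hypothesis still on the path just before the last node of the path pays the full path cost minus the final test. Those are more than $|\cH'|/2$ hypotheses, which gives $\OPT^A \ge (|\cH'|/2)\cdot$(cost of the heavy path, except the last test). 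The last test is the main obstacle. Our plan is to stop the heavy path at the last node $v$ whose subtree contains more than $|\cH'|/2$ hypotheses and take $\cF$ to be the path to $v$ \emph{including} $v$'s test. All hypotheses below $v$, more than half of them, pay every test in $\cF$, so $\OPT^A\ge \tfrac{|\cH'|}{2}c(\cF)$. Feasibility still holds, since the children of $v$ have size at most $|\cH'|/2$. This yields the factor $2$ and hence $\COSTA(D)\le 2\alpha\log_{2/\beta}n\cdot\OPT^A(\cI)$.
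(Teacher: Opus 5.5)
Your proof is correct and uses the same ingredients as the paper. These are the heavy-path bounds $\OPT_{SF}\le\OPT^W(\cI_{\cH'})$ and $\frac{|\cH'|}{2}\cdot\OPT_{SF}\le\OPT^A(\cI_{\cH'})$, where the test at the last node holding more than $|\cH'|/2$ hypotheses is included in the family (as you do in Step~3; in Step~2 the remaining parts lie inside the children of that node, not in the node itself), together with monotonicity of $\OPT^W$ and subadditivity of $\OPT^A$ over disjoint sub-instances. The only difference is bookkeeping: you count at most $\log_{2/\beta}n$ phases per path and sum level by level, which is valid because the last phase starts with at least two hypotheses and $\log_{2/\beta}2\ge 1$; the paper instead inducts on $|\cH|$ via $\log_{2/\beta}(\beta|\cH|/2)=\log_{2/\beta}|\cH|-1$, which needs no rounding argument.
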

\begin{proof}
    We begin by establishing lower bounds on $\OPT^W\br{\cI}$ and $\OPT^A\br{\cI}$. Let $D_W^*$ and $D_A^*$ be optimal decision trees for the worst-case and average-case costs, respectively.
    We perform two separate traversals, one of $D_W^*$ and one of $D_A^*$. Starting from their respective roots $r_W$ and $r_A$, we step towards the largest response until we reach a node $v_W$ or $v_A$ respectively, such that all of the responses below such node contain at most $\spr{\cH}/2$ hypotheses. Observe that $\cost\br{D_W^*, v_W} \leq \OPT^W\br{\cI}$, since the cost of the optimal decision tree is at least the cost of tests associated with any root-to-leaf path. We also have $\cost\br{D_A^*, v_A}\cdot n/2\leq \OPT^A\br{\cI}$. To see this, consider any test $t\in \pathbt_{D_A^*}\br{v_A}$. Since $t$ is performed when at least half of the hypotheses are present in the set of hypotheses consistent with all of the previous responses, we have that $t$ contributes $c\br{t}$ units of cost to the identification cost of at least $n/2$ hypotheses.
    Therefore, each such test contributes at least $c\br{t}\cdot n/2$ to $\COSTA\br{D_A^*, v_A}$.

    By definition, both $\pathbt_{D_W^*}\br{v_W}$ and $\pathbt_{D_A^*}\br{v_A}$ form separating subfamilies for the instance $\cI_{SF}$. Thus, $\OPT_{SF}\br{\cI_{SF}}\leq \cost\br{D_W^*, v_W}$ and $\OPT_{SF}\br{\cI_{SF}}\leq \cost\br{D_A^*, v_A}$. Since we use an $\br{\alpha, \beta}$-approximation algorithm for the \textsc{Separating Subfamily}, we have that $c\br{\cF}\leq \alpha\cdot \cost\br{D_W^*, v_W}$ and $c\br{\cF}\leq \alpha\cdot \cost\br{D_A^*, v_A}$. Moreover, we know that for every $P\in \cP_{\cF}, \spr{P}\leq\beta\cdot \spr{\cH}/2$, and therefore, for each possible target hypothesis the algorithm eventually recurses.

    We also make use of the following two immediate folklore observations. Let $\cP$ be any partition of $\cH$ (not necessarily $\cP_{\cF}$ or a member of $\cT$). We have that
    \[
    \OPT^W\br{\cI}\leq \max_{P\in \cP} \brc{\OPT^W\br{\cI_P}}\quad \text{and}\quad
    \sum_{P\in \cP} \OPT^A\br{\cI_P}\leq \OPT^A\br{\cI}.
    \]

    The first inequality is straightforward, since for any $P\in \cP$, any decision tree for $\cH$ can also be used to identify the target hypothesis in $P$. The second inequality follows from the fact that for any $P\in \cP$, any decision tree $D$ for $\cH$ can also be used to identify the target hypothesis in $P$, and members of $P\in \cP$ are disjoint. The contribution of any test $t\in \cT$ in $D$ is the size of the set of hypotheses it is performed in. This same set is partitioned by $\cP$ into disjoint subsets, and therefore, the sum of the contributions of this test over all $P\in \cP$ is at most the contribution of this test in $D$.

    Armed with the above observations, we prove the following claim by induction on $\spr{\cH}$:
    \[
    \COSTW\br{D}\leq \alpha\cdot \log_{2/\beta} n\cdot \OPT^W\br{\cI}\quad \text{and} \quad \COSTA\br{D}\leq 2\alpha\cdot \log_{2/\beta} n\cdot \OPT^A\br{\cI}.
    \]
    For the base case, when we have $\spr{\cH}=1$, the claim holds trivially since no tests are present in $D$ and $\COSTW\br{D}=\COSTA\br{D}=0$.

    Now, assume by induction that the claim holds for all $\cH$, such that $\spr{\cH}<k$. Let $\spr{\cH}=k$ and let $\cP$ be the partition of $\cH$ consisting of all of the responses which might occur when performing the tests in $\cF$ as in the algorithm\footnote{Here, observe that $\cP$ might not be exactly $\cP_{\cF}$. Some responses in  $\cP$ might be a union of members of $\cP_{\cF}$.}. Moreover, for each $P\in \cP$, let $D_P$ be the subtree of $D$ associated with $P$, constructed recursively.
    We have that
    \begin{align*}
        \COSTW\br{D}&\leq c\br{\cF}+\max_{P\in \cP} \brc{\COSTW\br{D_P}}
        \\&\leq \alpha\cdot \cost\br{D_W^*, v_W}+\max_{P\in \cP} \brc{\COSTW\br{D_P}}
        \\&\leq \alpha\cdot \OPT^W\br{\cI}+\max_{P\in \cP} \brc{\COSTW\br{D_P}}\\
        &\leq \alpha\cdot \OPT^W\br{\cI}+\max_{P\in \cP} \brc{\alpha\cdot \log_{2/\beta} |P|\cdot \OPT^W\br{\cI_P}}\\
        &\leq \alpha\cdot \OPT^W\br{\cI}+\alpha\cdot \log_{2/\beta} \frac{\beta\cdot \spr{\cH}}{2}\cdot \max_{P\in \cP}\brc{\OPT^W\br{\cI_P}}\\
        &\leq \alpha\cdot \OPT^W\br{\cI}+\alpha\cdot \log_{2/\beta} \frac{\beta\cdot \spr{\cH}}{2}\cdot \OPT^W\br{\cI}= \alpha\cdot \log_{2/\beta} k\cdot \OPT^W\br{\cI},
    \end{align*}
    where the second inequality follows since $\cF$ is an $\br{\alpha, \beta}$-approximation to the \textsc{Separating Subfamily} problem, and the third inequality follows from the lower bound $\cost\br{D_W^*, v_W}\leq \OPT^W\br{\cI}$. The fourth inequality is by the inductive hypothesis, the fifth inequality comes from the fact that for $P\in \cP$, $\spr{P}\leq \beta\cdot \spr{\cH}/2$, and the sixth inequality is by $\max_{P\in \cP}\brc{\OPT^W\br{\cI_P}}\leq \OPT^W\br{\cI}$.

    Similarly, we have that
    \begin{align*}
        \COSTA\br{D}&\leq c\br{\cF}\cdot n+\sum_{P\in \cP} \COSTA\br{D_P}
        \\&\leq \alpha\cdot \cost\br{D_A^*, v_A}\cdot n+\sum_{P\in \cP} \COSTA\br{D_P}\\
        &\leq 2\alpha\cdot \OPT^A\br{\cI}+\sum_{P\in \cP} \COSTA\br{D_P}\\&\leq 2\alpha\cdot \OPT^A\br{\cI}+\sum_{P\in \cP} 2\alpha\cdot \log_{2/\beta} |P|\cdot \OPT^A\br{\cI_P}
        \\&\leq 2\alpha\cdot \OPT^A\br{\cI}+2\alpha\cdot \log_{2/\beta} \frac{\beta\cdot \spr{\cH}}{2}\cdot \sum_{P\in \cP}\OPT^A\br{\cI_P}\\
        &\leq 2\alpha\cdot \OPT^A\br{\cI}+2\alpha\cdot \log_{2/\beta} \frac{\beta\cdot \spr{\cH}}{2}\cdot \OPT^A\br{\cI} = 2\alpha\cdot \log_{2/\beta} k\cdot \OPT^A\br{\cI}.
    \end{align*}
    where the second inequality follows from the fact that $\cF$ is an $\br{\alpha, \beta}$-approximation to the \textsc{Separating Subfamily}, and the third one from the lower bound $\cost\br{D_A^*, v_A}\cdot n/2\leq \OPT^A\br{\cI}$. The fourth inequality follows from the inductive hypothesis, the fifth since for $P\in \cP$, $\spr{P}\leq \beta\cdot \spr{\cH}/2$, and the sixth from $\sum_{P\in \cP}\OPT^A\br{\cI_P}\leq \OPT^A\br{\cI}$.
\end{proof}
\section{Finding a Good Separating Subfamily}
In this section we present a bicriteria, constant-factor approximation algorithm for the \textsc{Separating Subfamily} problem.
To do so, we use the following well-known result of \cite{TheBudgetedMaximumCoverageProblem} regarding approximating the \textsc{Maximum Coverage} problem:
\begin{theorem}
  There is an $\br{1-1/e}$-approximation algorithm for the \textsc{Maximum Coverage}.
\end{theorem}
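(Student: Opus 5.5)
This is the classical result of Khuller, Moss and Naor~\cite{TheBudgetedMaximumCoverageProblem}. Here the \textsc{Maximum Coverage} problem carries arbitrary costs $c$ and a budget $B$, so this is the budgeted version, and the plan is to reproduce their argument: partial enumeration combined with a cost-effectiveness greedy.

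\textbf{The algorithm.} For every subfamily $\cC_0\subseteq\cS$ with $|\cC_0|\le 3$ and $c\br{\cC_0}\le B$:
\begin{enumerate}
\item Start from $\cC_0$ and proceed greedily.
\item In each step, take the set $S$ maximizing $\spr{S\setminus\cov\br{\cC}}/c\br{S}$.
\item Add $S$ if it still fits in the budget; otherwise discard it.
\item Stop when no sets remain.
\end{enumerate}
Output the best family found over all choices of $\cC_0$. Enumeration makes the running time $\cO\br{m^3}$ times the cost of a greedy run, which is polynomial.

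\textbf{The analysis.} Fix an optimal cover $\cC^*$. If $|\cC^*|\le 3$, it is enumerated and the claim is immediate. Otherwise, let $\cC_0$ consist of the three sets of $\cC^*$ that add the most new elements when taken in greedy order. Each remaining set of $\cC^*$ then contributes at most $\OPT_{MC}/3$ beyond what $\cC_0$ already covers. Now run the greedy from $\cC_0$ on the residual instance, and consider the first set $S_{\ell+1}$ of $\cC^*$ that greedy rejects for budget reasons; if no such rejection occurs, greedy dominates $\cC^*$ directly.

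\textbf{The key step.} We establish the standard per-step inequality. Since every set of $\cC^*$ is still available to greedy with total cost at most $B$, averaging over the sets of $\cC^*$ shows that the $i$-th greedy choice covers at least $\frac{c\br{S_i}}{B}\br{\OPT'-\text{covered}_{i-1}}$ new elements, where $\OPT'$ is the residual optimum. Unrolling this with $1-x\le e^{-x}$, and using $\sum_{i\le \ell+1}c\br{S_i}>B-c\br{\cC_0}$, shows that the first $\ell+1$ greedy picks cover at least $\br{1-1/e}\OPT'$.

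\textbf{The main obstacle.} The hardest part is that $S_{\ell+1}$ is not actually taken. We must show that dropping it loses at most $\OPT/3$. This is exactly why $\cC_0$ is chosen as the three largest marginal contributors: that choice bounds the contribution of any single remaining set by $\OPT/3$. Combining the $\OPT/3$ from $\cC_0$ with the residual bound then yields
\[
\spr{\cov\br{\cC}}\ge \br{1-1/e}\OPT_{MC}.
\]
Since the statement is quoted from the literature, we would simply invoke~\cite{TheBudgetedMaximumCoverageProblem} rather than re-prove it here.
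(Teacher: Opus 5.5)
The paper gives no proof of this statement; it is quoted from Khuller, Moss and Naor~\cite{TheBudgetedMaximumCoverageProblem}. Your closing decision to invoke that reference is therefore exactly what the paper does. You are also right that the cost-and-budget version is the one needed.

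If the sketch is meant to stand as a self-contained proof, however, two steps do not close as written.

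\textbf{The averaging denominator.} The averaging step must use the residual budget $B-c\br{\cC_0}$, not $B$. The sets of $\cC^*\setminus\cC_0$ have total cost at most $B-c\br{\cC_0}$, so each greedy pick covers at least $\frac{c\br{S_i}}{B-c\br{\cC_0}}\br{\OPT'-\text{covered}_{i-1}}$ new elements. Only with this denominator does $\sum_{i\le \ell+1}c\br{S_i}>B-c\br{\cC_0}$ give the factor $e^{-1}$. With $B$ you only obtain $\exp\br{-\br{B-c\br{\cC_0}}/B}$, which can be arbitrarily close to $1$.

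\textbf{The loss from discarding $S_{\ell+1}$.} Bounding this loss by $\OPT_{MC}/3$ is too weak. Write $W_0=\spr{\cov\br{\cC_0}}$. Your lower bound then becomes $W_0+\br{1-1/e}\br{\OPT_{MC}-W_0}-\OPT_{MC}/3$, which is below $\br{1-1/e}\OPT_{MC}$ whenever $W_0<\frac{e}{3}\OPT_{MC}$. The greedy ordering of $\cC_0$ actually gives something sharper. The three marginal gains of $\cC_0$ are non-increasing, and every set of $\cC^*\setminus\cC_0$ adds at most the third one beyond $\cov\br{\cC_0}$, hence at most $W_0/3$. The total is then at least $\br{1-1/e}\OPT_{MC}+\br{1/e-1/3}W_0\ge\br{1-1/e}\OPT_{MC}$. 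The inequality $1/e>1/3$ is precisely why three sets, rather than two, are enumerated.
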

We prove the following theorem, which is the main result of this section:
\begin{theorem}\label{thm:separating_subfamily}
For every $\delta>0$ there exists a bicriteria $\br{1+\delta, \frac{\sqrt{1-\br{1-1/e}\cdot\br{1-f}}}{f}}$-approximate FPTAS for the \textsc{Separating Subfamily}.
\end{theorem}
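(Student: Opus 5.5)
We reduce \textsc{Separating Subfamily} to \textsc{Maximum Coverage} on the edges of the complete graph $K_{\cU}$ whose vertices are the elements of $\cU$. Each partition $\cP\in\cT$ becomes the set $S_{\cP}\subseteq E(K_{\cU})$ of pairs $uv$ separated by $\cP$, with cost $c\br{\cP}$. Each $S_{\cP}$ is a triangle cut: if $u,v$ lie in different parts, then $w$ cannot share a part with both of them. Unions of triangle cuts are again triangle cuts, so $E_{\cS}$ is one for every subfamily $\cS$. The components of $K_{\cU}\setminus E_{\cS}$ are exactly the parts of $\cP_{\cS}$, and the uncovered edges are exactly the pairs inside a common part.

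We guess the budget $B$, i.e.\ the value of $\OPT_{SF}$, up to a factor $1+\delta$. Costs are integers bounded by $c\br{\cT}$, so it suffices to try geometrically spaced values $B=(1+\delta)^i$ (or all values after standard scaling and rounding), which gives polynomially many guesses. For each guess we run the $\br{1-1/e}$-approximation for \textsc{Maximum Coverage} with budget $B$ and return the cheapest family that meets the size guarantee. To handle the $1+\delta$ loss rigorously, we take the smallest guess $B\ge \OPT_{SF}$. For that guess the optimal separating subfamily $\cF^*$ is feasible, so $c\br{\cC}\le B\le(1+\delta)\OPT_{SF}$.

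Next we bound the coverage. In $\cP_{\cF^*}$ every part has size at most $fn$, so the number of uncovered pairs is $\sum_P\binom{|P|}{2}\le \frac12\sum_P|P|^2\le \frac12 fn\cdot n=fn^2/2$. Ignoring lower-order terms (I would work with ordered pairs including the diagonal, i.e.\ count $\sum_P|P|^2$, to keep things clean), $\cF^*$ covers at least $(1-f)n^2$ ordered pairs out of $n^2$. The approximate cover $\cC$ therefore covers at least $(1-1/e)(1-f)n^2$, which leaves at most $\br{1-(1-1/e)(1-f)}n^2$ uncovered ordered pairs, diagonal included.

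Finally we bound the largest part. Any part $P$ of $\cP_{\cC}$ contributes $|P|^2$ uncovered ordered pairs, so $|P|^2\le\br{1-(1-1/e)(1-f)}n^2$, i.e.\ $|P|\le\sqrt{1-(1-1/e)(1-f)}\cdot n=\frac{f}{\beta}n$ with $\beta=\sqrt{1-(1-1/e)(1-f)}/f$, as claimed. This step uses that $\cP_{\cC}$ is given by the components, which the triangle cut property ensures: the uncovered edges form a disjoint union of cliques. The main obstacle I expect is the careful accounting of diagonal and unordered pairs, so that the constants match exactly. Counting ordered pairs with the diagonal included in the universe makes both the upper bound for $\cF^*$ and the lower bound $|P|^2$ exact. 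The diagonal pairs $uu$ can never be covered, so I would treat them as part of the universe of size $n^2$ and let the maximum-coverage guarantee apply to the $n^2-n$ coverable ones. That slightly strengthens the bound, so the inequality still holds.
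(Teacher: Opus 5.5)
Your argument is correct and is essentially the paper's proof: the same reduction to \textsc{Maximum Coverage} on separated pairs, the bound $\sum_P |P|^2 \le f n^2$ for the optimum, and the fact that every pair inside the largest part stays uncovered, with ordered pairs only rescaling the counts by $2$ and sparing you the quadratic inequality. Your rule of returning the cheapest output that meets the size bound is a cleaner way to handle the budget guess than the paper's ``pick the smallest $B_k \ge \OPT_{SF}$'', which cannot be computed directly since $\OPT_{SF}$ is unknown. One slip: for your $\beta$ we have $\sqrt{1-(1-1/e)(1-f)}\,n = \beta f n$, not $\frac{f}{\beta}n$; this mirrors a typo in the paper's bicriteria definition, and $|P| \le \beta f n$ is the bound the paper actually proves and uses.
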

\begin{proof}
Let $\cI=\br{\cU, \cT, c, f}$ be a \textsc{Separating Subfamily} instance. For any partition $\cP$ of~$\cU$, denote 
\[
S_{\cP}=\brc{\{u,v\}\in \binom{\cU}{2}\colon u,v\text{ are separated by }\cP}.
\]
The algorithm is as follows:
\begin{enumerate}
  \item Guess the budget $B^*$ by trying consecutive powers of $1+\delta$: for all relevant integer values of $k$, try $B_k=\br{1+\delta}^k$ and pick $B^*$ to be the smallest tried budget $B_k\geq\OPT_{SF}\br{\cI}$. Due to this $\OPT_{SF}\br{\cI}\leq B^*\leq \br{1+\delta}\cdot \OPT_{SF}\br{\cI}$.
  \item Create an instance $\cI_{MC}^{B^*}$ of the \textsc{Maximum Coverage} problem with budget $B^*$. The universe is $\binom{\cU}{2}$ and the family of subsets is $\cS=\{S_{\cP}\colon \cP\in \cT\}$, with $c\br{S_{\cP}}=c\br{\cP}$.
  \item Run the $\br{1-1/e}$-approximation algorithm for the \textsc{Maximum Coverage} on the instance $\cI_{MC}^{B^*}$ to obtain a cover $\cC_{B^*}$. Let $\cF=\{\cP\in \cT\colon S_{\cP}\in \cC_{B^*}\}$ be the corresponding subfamily of partitions.
  \item Return $\cF$.
\end{enumerate}

Since the required number of guesses for $B$ is at most $\cO\br{\log_{1+\delta} c\br{\cT}}$, the algorithm runs in polynomial time.
To prove that all of the pieces are of small enough size, we begin with two lemmas regarding cutting a complete graph into small pieces. Intuitively, we need these statements to show what happens to the size of members of the partition $P_{\cF}$, when cutting edges representing pairs of elements in $\cU$ that are separated by the returned subfamily $\cF$.
\begin{lemma}\label{lemma:cutting_clique_lower_bound}
  Let $K_n$ be a complete graph on $n$ vertices, $f\in [1/n,1]$ be a real number and $S\subseteq E\br{K_n}$. If for every connected component $H\in K_n\setminus S$, $\spr{V\br{H}}\leq f\cdot n$, then $|S|\geq \frac{1-f}{2}\cdot n^2$.
\end{lemma}
\begin{proof}
  Let $|S|=k$, and assume that sizes of members of $K_n\setminus S$ are $n_1,\dots,n_p$, with $n_i \le f\cdot n$ for $i\in[p]$ and $\sum_{i=1}^p n_i = n$. By counting the edges between members of $K_n\setminus S$ in $K_n$ we get
  \[
  k \geq \binom{n}{2}-\sum_{i=1}^p\binom{n_i}{2} = \frac{1}{2} \cdot\br{n^2-n-\sum_{i=1}^p n_i^2-\sum_{i=1}^pn_i} =\frac{1}{2} \cdot\left(n^2 - \sum_{i=1}^p n_i^2\right).
  \]
Since for every $i\in [p]$, $n_i\leq f\cdot n$, we obtain
\[
\sum_{i=1}^p n_i^2 \leq f\cdot n \cdot \sum_{i=1}^p n_i = f\cdot n^2,
\]
Therefore,
\[
k \geq \frac{1}{2}\cdot\br{n^2 - f\cdot n^2} = \frac{1-f}{2}\cdot n^2,
\]
which gives the claim.
\end{proof}
\begin{lemma}\label{lemma:cutting_clique_upper_bound}
  Let $K_n$ be a complete graph on $n$ vertices, $f\in [1/n, 1]$ be a real number and $\cJ$ be a set of triangle cuts of $K_n$. If $|E_{\cJ}|\geq \gamma \cdot \frac{1-f}{2}\cdot n^2$ for some constant $\gamma\in [0,1]$, then for each $H\in K_n\setminus E_{\cJ}$, $\spr{V\br{H}}\leq \sqrt{1-\gamma\cdot \br{1-f}}\cdot n$.
\end{lemma}
\begin{figure}[H]
  \centering
  \colorlet{Kblob}{gray!10!white}
  \colorlet{Sblob}{blue!12!white}
  \colorlet{Lblob}{red!12!white}
  \colorlet{Snode}{blue!55!white}
  \colorlet{Lnode}{red!55!white}
  \colorlet{Ledge}{red!70!black}

  \begin{tikzpicture}[scale=1.0]
    \draw[fill=Kblob, draw=gray!65!black, thick, dashed] (0.0, 0.1) ellipse (6.2cm and 2.55cm);
    \node[font=\small\bfseries, gray!70!black] at (0.0, 2.35) {$K_n$};

    \node[ellipse, draw=blue!60!black, fill=Sblob, thick, minimum width=3.9cm, minimum height=2.7cm] (SblobNode) at (-3.60, 0.05) {};
    \node[ellipse, draw=red!65!black, fill=Lblob, thick, minimum width=5.4cm, minimum height=3.8cm] (LblobNode) at (3.05, 0.05) {};
    \node[font=\small\bfseries, blue!60!black] at (-3.60, 1.72) {$S$};
    \node[font=\small\bfseries, red!65!black] at (2.8, 2.15) {$L$};

    \foreach \i in {1,...,8} {
      \coordinate (s\i) at ($(-3.60,0.05)+(45*\i:1.05)$);
    }

    \foreach \i in {1,...,14} {
      \coordinate (l\i) at ($(3.05,0.05)+(12+\i*360/14:1.64)$);
    }

    \foreach \i in {1,...,8} {
      \foreach \j in {1,...,8} {
        \ifnum\i<\j
          \draw[blue!55!black, dashed, line width=0.55pt, opacity=0.55] (s\i)--(s\j);
        \fi
      }
    }

    \foreach \i in {1,...,14} {
      \foreach \j in {1,...,14} {
        \ifnum\i<\j
          \draw[Ledge, line width=0.75pt, opacity=0.75] (l\i)--(l\j);
        \fi
      }
    }

    \draw[gray!60!black, dashed, line width=0.95pt, opacity=0.78] (SblobNode.270) -- (LblobNode.212);
    \draw[gray!60!black, dashed, line width=0.95pt, opacity=0.78] (SblobNode.49)  -- (LblobNode.128);
    \draw[gray!60!black, dashed, line width=0.95pt, opacity=0.78] (SblobNode.312) -- (LblobNode.196);
    \draw[gray!60!black, dashed, line width=0.95pt, opacity=0.78] (SblobNode.35)  -- (LblobNode.158);
    \draw[gray!60!black, dashed, line width=0.95pt, opacity=0.78] (SblobNode.346) -- (LblobNode.232);
    \draw[gray!60!black, dashed, line width=0.95pt, opacity=0.78] (SblobNode.19)  -- (LblobNode.176);
    \draw[gray!60!black, dashed, line width=0.95pt, opacity=0.78] (SblobNode.11)  -- (LblobNode.162);
    \draw[gray!60!black, dashed, line width=0.95pt, opacity=0.78] (SblobNode.3)   -- (LblobNode.202);
    \draw[gray!60!black, dashed, line width=0.95pt, opacity=0.78] (SblobNode.343) -- (LblobNode.162);
    \draw[gray!60!black, dashed, line width=0.95pt, opacity=0.78] (SblobNode.346) -- (LblobNode.140);
    \draw[gray!60!black, dashed, line width=0.95pt, opacity=0.78] (SblobNode.336) -- (LblobNode.188);

    \foreach \i in {1,...,8} {
      \filldraw[fill=Snode, draw=blue!70!black, line width=0.55pt] (s\i) circle (0.10);
    }
    \foreach \i in {1,...,14} {
      \filldraw[fill=Lnode, draw=red!75!black, line width=0.55pt] (l\i) circle (0.10);
    }
  \end{tikzpicture}

  \caption{Illustration of Lemma \ref{lemma:cutting_clique_upper_bound}}
  \label{fig:lemma4_cutting_clique}
\end{figure}
\begin{proof}
  Let $\cG=K_n\setminus E_{\cJ}$ and $L=\argmax_{H\in \cG} \brc{\spr{V\br{H}}}$, with $l=\spr{V\br{L}}$. Observe that every connected component of $\cG$ is a clique; if not we would contradict the triangle inequality. 
  Let $S=V\br{K_n}\setminus V\br{L}$ and $k=\spr{S}=n-l$.
  Since $L$ is a clique, no edge inside $V\br{L}$ belongs to $E_{\cJ}$, so every edge of $E_{\cJ}$ has an endpoint in $S$. There are exactly $k\cdot (n-k)$ edges between $S$ and $V\br{L}$ and $\binom{k}{2}$ edges inside $S$. See Figure \ref{fig:lemma4_cutting_clique} for a visual example. Therefore
  \[
  \gamma \cdot \frac{1-f}{2}\cdot n^2\leq \spr{E_{\cJ}}\leq k\cdot(n-k)+\binom{k}{2}\leq k\cdot n-\frac{k^2}{2}.
  \]
  Rearranging gives $k^2-2\cdot n\cdot k+\gamma\cdot \br{1-f}\cdot n^2\leq 0$. Solving this quadratic inequality with respect to $k$ and discarding the larger root gives
  $
  k \geq n-\sqrt{1-\gamma\cdot\br{1-f}}\cdot n.
  $
  Hence 
  \[
  l=n-k\leq \sqrt{1-\gamma\cdot\br{1-f}}\cdot n.
  \]
  Since $L$ is the largest member of $\cG$, for every $H \in \cG$, $\spr{V\br{H}}\leq l\leq \sqrt{1-\gamma\cdot\br{1-f}}\cdot n$.
\end{proof}
\begin{lemma}
If $\spr{\cov\br{\cC}}\geq \gamma\cdot \OPT_{MC}\br{\cI_{MC}^{B^*}}$, then for each $P\in \cP_{\cF}$, $\spr{P}\leq \sqrt{1-\gamma\cdot\br{1-f}}\cdot n$.
\end{lemma}
\begin{proof}

\begin{figure}[H]
  \centering
  \colorlet{partA}{blue!14!white}
  \colorlet{partB}{red!14!white}
  \colorlet{partC}{green!16!white}
  \colorlet{nodeA}{blue!42!white}
  \colorlet{nodeB}{red!40!white}
  \colorlet{nodeC}{green!38!white}
  \colorlet{edgeA}{blue!65!black}
  \colorlet{edgeB}{red!65!black}
  \colorlet{edgeC}{green!50!black}

  \begin{minipage}[t]{0.48\textwidth}
    \centering
    \begin{tikzpicture}[scale=0.95]
      \draw[fill=partA, draw=edgeA, thick] (-2.4, 2.3) ellipse (1.5cm and 1.1cm);
      \node at (-2.4, 3.75) {\tiny\textbf{Part 1 }};
      \filldraw[fill=nodeA, draw=edgeA] (-3.2, 2.82) circle (0.11);
      \filldraw[fill=nodeA, draw=edgeA] (-2.5, 2.97) circle (0.11);
      \filldraw[fill=nodeA, draw=edgeA] (-1.8, 2.72) circle (0.11);
      \filldraw[fill=nodeA, draw=edgeA] (-3.0, 1.97) circle (0.11);
      \filldraw[fill=nodeA, draw=edgeA] (-2.3, 1.72) circle (0.11);
      \filldraw[fill=nodeA, draw=edgeA] (-1.7, 2.07) circle (0.11);

      \draw[fill=partB, draw=edgeB, thick] (-3.1, 0.20) ellipse (0.95cm and 0.72cm);
      \node at (-3.1, -0.72) {\tiny\textbf{Part 2 }};
      \filldraw[fill=nodeB, draw=edgeB] (-3.55, 0.42) circle (0.105);
      \filldraw[fill=nodeB, draw=edgeB] (-2.95, 0.52) circle (0.105);
      \filldraw[fill=nodeB, draw=edgeB] (-3.35, -0.08) circle (0.105);
      \filldraw[fill=nodeB, draw=edgeB] (-2.75, 0.02) circle (0.105);

      \draw[fill=partC, draw=edgeC, thick] (-1.05, 0.20) ellipse (0.78cm and 0.62cm);
      \node at (-1.05, -0.62) {\tiny\textbf{Part 3 }};
      \filldraw[fill=nodeC, draw=edgeC] (-1.30, 0.36) circle (0.105);
      \filldraw[fill=nodeC, draw=edgeC] (-0.80, 0.04) circle (0.105);

      \node[anchor=north, font=\tiny] at (-2.1, -1.20) {(a) Partitioned elements};
    \end{tikzpicture}
  \end{minipage}\hfill
  \begin{minipage}[t]{0.48\textwidth}
    \centering
    \begin{tikzpicture}[scale=1.00]
      \coordinate (u1) at (0, 2.2);
      \coordinate (u2) at (1.1, 1.905);
      \coordinate (u3) at (1.905, 1.1);
      \coordinate (u4) at (2.2, 0);
      \coordinate (u5) at (1.905, -1.1);
      \coordinate (u6) at (1.1, -1.905);

      \coordinate (v1) at (0, -2.2);
      \coordinate (v2) at (-1.1, -1.905);
      \coordinate (v3) at (-1.905, -1.1);
      \coordinate (v4) at (-2.2, 0);

      \coordinate (w1) at (-1.905, 1.1);
      \coordinate (w2) at (-1.1, 1.905);

      \draw[gray, very thin, opacity=0.3] (0, 0) circle (2.2);

      \draw[edgeA, line width=1.15pt]
        (u1)--(u2) (u1)--(u3) (u1)--(u4) (u1)--(u5) (u1)--(u6)
        (u2)--(u3) (u2)--(u4) (u2)--(u5) (u2)--(u6)
        (u3)--(u4) (u3)--(u5) (u3)--(u6)
        (u4)--(u5) (u4)--(u6)
        (u5)--(u6);
      \draw[edgeB, line width=1.15pt]
        (v1)--(v2) (v1)--(v3) (v1)--(v4)
        (v2)--(v3) (v2)--(v4)
        (v3)--(v4);
      \draw[edgeC, line width=1.15pt] (w1)--(w2);

      \foreach \u in {u1,u2,u3,u4,u5,u6} {
        \foreach \v in {v1,v2,v3,v4} {
          \draw[black, dashed, line width=0.75pt, opacity=0.6] (\u)--(\v);
        }
      }
      \foreach \u in {u1,u2,u3,u4,u5,u6} {
        \foreach \w in {w1,w2} {
          \draw[black, dashed, line width=0.75pt, opacity=0.6] (\u)--(\w);
        }
      }
      \foreach \v in {v1,v2,v3,v4} {
        \foreach \w in {w1,w2} {
          \draw[black, dashed, line width=0.75pt, opacity=0.6] (\v)--(\w);
        }
      }

      \foreach \u in {u1,u2,u3,u4,u5,u6} {
        \filldraw[fill=nodeA, draw=black, line width=0.45pt] (\u) circle (0.135);
      }
      \foreach \v in {v1,v2,v3,v4} {
        \filldraw[fill=nodeB, draw=black, line width=0.45pt] (\v) circle (0.135);
      }
      \foreach \w in {w1,w2} {
        \filldraw[fill=nodeC, draw=black, line width=0.45pt] (\w) circle (0.135);
      }

      \node[anchor=north, font=\tiny] at (0, -2.65) {(b) Complete graph of pairs};
    \end{tikzpicture}
  \end{minipage}

  \caption{An example partition with parts of sizes 6, 4 and 2. Figure (a) shows parts as ellipses containing elements drawn as points. Figure (b) depicts a complete graph on the points/elements, with colored intra-part edges and dashed inter-part edges.}
  \label{fig:partition_cutting}
\end{figure}

Let $K_n=\br{\cU, \binom{\cU}{2}}$. Observe, that for each $\cP\in \cT$, $S_{\cP}$ is a triangle cut of $K_n$. This is because if $\cP$ separates $u,v\in \cU$, then for any $w\in \cU$, such that $u,v\neq w$ we have one of three following cases. Let $P_u, P_v, P_w$ be the parts of $\cP$ containing $u,v,w$, respectively. Then:
\begin{enumerate}
\item Assume $P_u=P_w\neq P_v$. Then $(u,v)\in S_{\cP}$ and $(u,w)\notin S_{\cP}$, so $(v,w)\in S_{\cP}$.
\item Assume $P_v=P_w\neq P_u$. Then $(u,v)\in S_{\cP}$ and $(v,w)\notin S_{\cP}$, so $(u,w)\in S_{\cP}$.
\item Assume $P_u\neq P_v$ and $P_u\neq P_w$. Then $(u,v)\in S_{\cP}$ and $(u,w)\in S_{\cP}$, so $(v,w)\in S_{\cP}$.
\end{enumerate}
For a visual example see Figure \ref{fig:partition_cutting}.

Let $\cF^*\subseteq\cT$ be any optimal solution for $\cI$ and $\cS_{\cF^*}=\brc{S_{\cP}\colon \cP\in \cF^*}$ be the corresponding set of triangle cuts. For any $P\subseteq \cH$, let $K_P=\br{P,\binom{P}{2}}$.
By definition of $\cF^*$ for each $P\in \cP_{\cF^*}$, $\spr{V\br{K_P}}=\spr{P}\leq f\cdot n$, so using Lemma \ref{lemma:cutting_clique_lower_bound}, we get that $\spr{\cov\br{\cS_{\cF^*}}}=\spr{E_{\cS_{\cF^*}}}\geq \frac{1-f}{2}\cdot n^2$. By setting $B^*\geq \OPT_{SF}\br{\cI}$, we ensure that the optimal solution for $\cI_{MC}^{B^*}$, $\cS^*\subseteq \cS$, is a set of triangle cuts with
$
\spr{E_{\cS^*}}=\spr{\cov\br{\cS^*}}\geq \frac{1-f}{2}\cdot n^2
.$
Since $\cC_{B^*}$ was obtained by running an $\gamma$-approximate algorithm for the \textsc{Maximum Coverage} on the instance~$\cI_{MC}^{B^*}$, we know that
\[
\spr{E_{\cC_{B^*}}}=\spr{\cov\br{\cC_{B^*}}}\geq \gamma \cdot \frac{1-f}{2}\cdot n^2.
\]
Because the set of triangle cut $\cC_{B^*}$ satisfies the properties of Lemma~\ref{lemma:cutting_clique_upper_bound}, for each $P\in \cP_{\cF}$ we have
\[
\spr{P}=\spr{V\br{K_P}}\leq \sqrt{1-\gamma\cdot\br{1-f}}\cdot n
\]
and the claim follows.
\end{proof}

Since the returned subfamily $\cF$ separates $\cU$ into parts of size at most $\sqrt{1-\gamma\cdot\br{1-f}}\cdot n$, instead of the required $f\cdot n$, the size violation factor equals $\frac{\sqrt{1-\gamma\cdot\br{1-f}}}{f}$. Hence, by plugging $\gamma=1-1/e$ we get that the approximation achieved by the procedure is
\[
\br{1+\delta, \frac{\sqrt{1-\br{1-1/e}\cdot\br{1-f}}}{f}}.
\]
\end{proof}

Note, that by the guarantee established above, for the particular case  when $f=1/2$, required in our analysis, for each $P\in \cP_{\cF}$ we have $\spr{P}\leq \sqrt{\frac{e+1}{2\cdot e}}\cdot n$, and the size violation factor becomes
\[
\frac{\sqrt{1-\br{1-1/e}/2}}{1/2}=2\cdot\sqrt{\frac{e+1}{2\cdot e}}=\sqrt{\frac{2\cdot\br{e+1}}{e}}.
\]

\section{Putting Things Together}

Now we combine the results of the previous sections. Plugging in the appropriate values of $\alpha,\beta$ for $f=1/2$, obtained in Theorem~\ref{thm:separating_subfamily} into Theorem~\ref{thm:main_algorithm} yields the following corollary:
\begin{corollary}
    For an appropriate choice of $\delta>0$, say $\delta=10^{-4}$, there exists a polynomial-time algorithm that constructs a decision tree $D$ such that
    \[
        \COSTW\br{D}\leq \br{1+\delta}\cdot\frac{\log n}{\log\br{\sqrt{\frac{2\cdot e}{e+1}}}}\cdot \OPT^W\br{\cI}\leq 3.65\cdot \log n\cdot \OPT^W\br{\cI}
    \]
    \[
        \COSTA\br{D}\leq 2\cdot \br{1+\delta}\cdot\frac{\log n}{\log\br{\sqrt{\frac{2\cdot e}{e+1}}}}\cdot \OPT^A\br{\cI}\leq 7.3\cdot \log n\cdot \OPT^A\br{\cI}.
    \]
\end{corollary}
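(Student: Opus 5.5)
The corollary follows by substituting the guarantee of Theorem~\ref{thm:separating_subfamily} into Theorem~\ref{thm:main_algorithm}. First, I fix $f=1/2$ in the \textsc{Separating Subfamily} instance that the algorithm solves at each recursion node, as in Step~\ref{step:find_family}. Theorem~\ref{thm:separating_subfamily} then gives a polynomial-time bicriteria $(\alpha,\beta)$-approximation with $\alpha=1+\delta$ and $\beta=\sqrt{2(e+1)/e}$, as computed in the remark after that theorem. Since $e>1$, we have $\beta<2$, so $2/\beta=\sqrt{2e/(e+1)}>1$ and the logarithm base $2/\beta$ is meaningful.

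Next I invoke Theorem~\ref{thm:main_algorithm} with these values. This gives $\COSTW(D)\le(1+\delta)\log_{2/\beta}n\cdot\OPT^W(\mathcal{I})$ and the analogous bound with an extra factor of $2$ for $\COSTA$. Converting the base gives $\log_{2/\beta}n=\log n/\log\sqrt{2e/(e+1)}$, which yields the first inequality in each displayed line.

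For the numerical constants, I use base-$2$ logarithms as in the abstract. The ratio is $2e/(e+1)\approx 5.43656/3.71828\approx 1.46212$, whose $\log_2$ is about $0.54807$. Half of this is about $0.27404$, and its reciprocal is about $3.6491$. Multiplying by $1+10^{-4}$ gives about $3.6495<3.65$, and doubling gives about $7.299<7.3$. The margin is small, so I will carry the decimals carefully; this arithmetic is the only real point of care.

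Polynomial running time follows because every recursion node makes one call to the FPTAS of Theorem~\ref{thm:separating_subfamily}, which is polynomial for fixed $\delta$. The decision tree has at most $n$ leaves, and every internal node performs a test that splits the current candidate set nontrivially, so the tree has polynomially many nodes and the total number of calls is polynomial.
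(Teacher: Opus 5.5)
Your proposal is correct and matches the paper's argument, which likewise just substitutes $\alpha=1+\delta$ and $\beta=\sqrt{2(e+1)/e}$ (Theorem~\ref{thm:separating_subfamily} with $f=1/2$) into Theorem~\ref{thm:main_algorithm} and changes the base of the logarithm. Two minor touch-ups: first, the precise figures are $\log_2\frac{2e}{e+1}\approx 0.548059$, hence $\approx 3.64924$ before and $\approx 3.64961$ after the factor $1+10^{-4}$ ($\approx 7.2992$ doubled), so the bounds still hold with a margin of about $4\cdot 10^{-4}$; second, for the running time it is safer to count recursion calls (the recursion tree has at most $2n-1$ nodes, because each call on $\cH$ with $|\cH|\ge 2$ splits it into at least two parts of size at most $\beta|\cH|/2<|\cH|$) than to assume that every performed test splits the candidate set nontrivially, which the algorithm as written does not guarantee.
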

\section{Application: Adaptive Set Cover}

In this section, we provide a new application of the \textsc{Optimal Decision Tree}: we give an $\cO\br{\log \br{k\cdot n}}$-approximation algorithm for a problem we call \textsc{Adaptive Set Cover}, which we explain using the following motivating example.
We are given a system consisting of many components, and we know a-priori of several possible failure scenarios. Each scenario corresponds to a subset of components that will fail if it occurs, and we have a probability distribution over these scenarios. We also know that some scenario has occurred, we wish to identify it and fix the faulty components. To do so, we can use a number of tests, each consisting of several components and an associated cost. Performing a test incurs a specified amount of cost, reveals every failed component in that test, and fixes all of them. The goal is to find an adaptive strategy that minimizes the expected cost required to fix the entire system. In particular this means that the tests are performed sequentially, and upon obtaining the information about the status of all components of the test, we are allowed to adaptively choose the next test to perform, based on the already gathered information.

We model this problem as follows: The components of the system are encoded by a universe $\cU$ of $n$ elements and the possible failure scenarios are represented by a family $\cR$ of $k$ (different) subsets of $\cU$ called \emph{realizations}, together with a weight function $w\colon\cR\to \mathbb{N}$\footnote{Note that if we were to rescale these weights in order to add up to $1$, we could reinterpret them as a probability distribution over $\cR$, however we use natural-numbered weights for the sake of convenience.}. The tests correspond to a family $\cS$ of $m$ covering subsets of $\cU$ (by which we mean that $\cU=\bigcup_{S\in \cS}S$), each with an associated cost $c\colon \cS \to \mathbb{N}$. We assume that among $\cR$ there is a \emph{hidden realization} $R^* \in \cR$ required to be covered.
For any set $S\in \cS$ and realization $R \in \cR$, we denote by $O_{S, R} = S\cap R$ the outcome of the test $S$ when the realization is $R$. 
Performing a set/test $S \in \cS$ incurs a cost of $c\br{S}$, reveals $O_{S, R^*}$ and simultaneously covers it. A realization $R'\in \cR$ is said to be \emph{consistent} with this response if $O_{S, R'}=O_{S, R^*}$.
A \emph{covering strategy} is a decision tree $D=\br{V, E}$, where each internal node is labeled with a set/test $S \in \cS$, and each edge $(u, v) \in E$ is associated with a possible outcome of a test, and by extension all of the realizations still consistent with it. Moreover, each leaf of $D$ corresponds to a realization $R \in \cR$, denoting the fact that the strategy has covered $R$ and identified it to be the hidden realization $R^*$\footnote{Here, we mention that whenever we have that $R^*\subseteq R'$ for some other realization $R'\in \cR$, then upon covering of $R^*$, $R'$ might still be consistent with all of the previous responses and then the strategy needs to pick additional sets in order to become aware that the entirety of $R^*$ is indeed covered.}.
 The cost of covering $R$ using $D$ is defined as 
\[
\cost\br{D, R} = \sum_{S \in \pathbt_D\br{R}} c\br{S}.
\]
The \emph{average-case covering cost} of the strategy $D$ is defined as 
\[
\COSTA\br{D} = \sum_{R \in \cR} w\br{R}\cdot \cost\br{D, R}.
\] 
The \emph{worst-case covering cost} of the strategy $D$ is defined as 
\[
\COSTW\br{D} = \max_{R \in \cR} \brc{\cost\br{D, R}}.
\]
The goal of the \textsc{Adaptive Set Cover} problem is to find a covering strategy $D$ that minimizes either $\COSTA$ or $\COSTW$.

\subsection{More preliminaries}

For any set $S\in \cS$, we denote by $\cO_S$ the set of all possible outcomes of the test $S$, i.e., $\cO_S = \brc{O_{S, R}: R \in \cR}$. By $\cP_{S}$ we denote a partition of $\cR$ into disjoint subsets, where each part corresponds to a possible outcome of the test $S$, i.e., $\cP_{S} = \brc{\brc{R \in \cR: O_{S, R} = O}: O \in \cO_S}$.   

We now define the following problems relevant to the \textsc{Adaptive Set Cover}:
\begin{enumerate}
    \item \textsc{Set Cover}: Given a universe $\cU$ of $n$ elements, a family $\cS$ of covering subsets of $\cU$ (where $\cU=\bigcup_{S\in \cS}S$), and a cost function $c: \cS \to \mathbb{R}_{\geq 0}$, the goal is to find a cover $\cC \subseteq \cS$ which minimizes the cost $c\br{\cC}=\sum_{S \in \cC} c(S)$, subject to $\cov\br{\cC} = \cU$. 
    \item \textsc{Realization Identification}: Here the input is the same as in \textsc{Adaptive Set Cover} and the goal is again to find a decision tree $D$. However, each leaf node of $D$ corresponding to a realization $R \in \cR$, denotes the fact that the strategy has identified $R$ to be $R^*$ (but not necessarily covered it), which means that $R$ is the only realization consistent with outcomes of the previously performed tests.
    The cost of identifying $R$ using $D$ is defined as $\cost\br{D, R} = \sum_{v \in \pathbt_D\br{R}} c\br{S_v}$. The average-case and worst-case covering costs of the strategy $D$ are defined as $\COSTA\br{D} = \sum_{R \in \cR} w\br{R}\cdot \cost\br{D, R}$ and $\COSTW\br{D} = \max_{R \in \cR}\brc{\cost\br{D, R}}$, respectively. The goal is to find a strategy $D$ that minimizes either $\COSTA$ or $\COSTW$.
\end{enumerate}

For an \textsc{Adaptive Set Cover} instance $\cI$, denote by $\OPT_{ASC}^{A}\br{\cI}$ and $\OPT_{ASC}^{W}\br{\cI}$ the optimum average-case and worst-case costs of the \textsc{Adaptive Set Cover} and by $\OPT_{RI}^{A}\br{\cI}$ and $\OPT_{RI}^{W}\br{\cI}$ the optimum average-case and worst-case costs of the \textsc{Realization Identification}. For a \textsc{Set Cover} instance $\cI$, denote by $\OPT_{SC}\br{\cI}$ its optimum cost. Observe that $\OPT_{RI}^{A}\br{\cI} \leq \OPT_{ASC}^{A}\br{\cI}$ and $\OPT_{RI}^{W}\br{\cI} \leq \OPT_{ASC}^{W}\br{\cI}$ since any strategy that solves the \textsc{Adaptive Set Cover} also solves the \textsc{Realization Identification}.
\subsection{The algorithm for the Adaptive Set Cover}

It is relatively straightforward to see that the \textsc{Adaptive Set Cover} is a generalization of the \textsc{Set Cover}. Observe that if we were to have an instance of the \textsc{Adaptive Set Cover} with a single realization $R=\cU$ with probability~$1$, then any feasible decision tree $D$ would consist of a sequence of sets/tests which cover~$R$. Moreover, both cost criteria have the same value, and optimizing any of them is equivalent to solving the \textsc{Set Cover}, since the cost of any decision tree becomes the cost of all picked sets/tests.
This means that achieving an approximation ratio beyond $\cO\br{\log n}$ is impossible unless $\textup{P}=\textup{NP}$ \cite{AnalyticalApproachToParallelRepetition}. We show that this characterization is essentially tight, namely that the \textsc{Adaptive Set Cover} admits an $\cO\br{\log \br{k\cdot n}}$-approximation algorithm. The same result can also be obtained by reducing the problem to the \textsc{Adaptive Submodular Ranking} \cite{AdaptiveSubmodularityANewApproachToActiveLearningAndStochasticOptimization}. However, this framework is very general and requires much more technicality and careful analysis. Here, we obtain the same guarantees for our setup while employing a very simplistic argument.

\begin{theorem}\label{thm:asc}
There exists an algorithm which achieves an $\cO\br{\log \br{k\cdot n}}$-approximation for both the average-case and worst-case versions of the \textsc{Adaptive Set Cover} simultaneously.
\end{theorem}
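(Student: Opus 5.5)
The plan is a two-phase strategy: first identify the hidden realization $R^*$ using the ODT algorithm, then cover it with the greedy \textsc{Set Cover} algorithm.

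\textbf{Phase 1 (identification).} We view \textsc{Realization Identification} as an ODT instance. The hypotheses are the realizations $\cR$, and each set $S\in\cS$ is a test whose partition is $\cP_S$, with the same cost $c(S)$. Two realizations may be inseparable by every test. This happens exactly when $R\cap S=R'\cap S$ for all $S$, hence $R=R'$ since $\cS$ covers $\cU$. The realizations are distinct, so every pair is separated and the instance is valid. The average-case objective is weighted by $w$, while Theorem~\ref{thm:main_algorithm} is stated for unweighted sums. To handle this, we replicate each realization $R$ into $w(R)$ copies. Alternatively, we re-run the proof of Theorem~\ref{thm:main_algorithm} with weights: the lower bound uses the heavy path with weight at least $W/2$, and the superadditivity observation holds verbatim for weighted sums. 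The copies, however, are not separable, so the cleaner route is the weighted rerun. In that rerun we let the \textsc{Separating Subfamily} split by number of realizations, so recursion depth is $\cO(\log k)$, while the average-case lower bound uses weight. This mismatch is the first concern.

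\textbf{Handling the mismatch.} To keep the argument simple, we run Theorem~\ref{thm:main_algorithm} only on cardinality, which gives the worst-case bound directly. For the average case, we use a weighted \textsc{Separating Subfamily}, where pieces must have weight at most $f\cdot w(\cR)$. Its analysis carries over, with Lemmas~\ref{lemma:cutting_clique_lower_bound} and~\ref{lemma:cutting_clique_upper_bound} stated for weighted cliques: edge weights $w(u)w(v)$ and the same algebra with $\sum n_i$ replaced by total weight. This yields depth $\cO(\log (W/w_{\min}))$. Because that depends on the weights, for the $\cO(\log(kn))$ claim we first round weights. Realizations of weight below $W/(k^2)$ contribute negligibly, or we treat them in a separate group. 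The step I expect to be the main obstacle is obtaining a clean $\cO(\log k)$ average-case bound for weighted identification. I would try the weighted rerun of Theorem~\ref{thm:main_algorithm} combined with this bucketing first.

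\textbf{Phase 2 (covering).} After $R^*$ is identified, we know it. We run greedy \textsc{Set Cover} on the universe $R^*\setminus(\text{elements already covered})$, which costs at most $H_n\cdot \OPT_{SC}(R^*)$. The key lower bound is that any adaptive covering strategy must, along the path of $R$, pick sets covering $R$. Hence $\cost(D^*,R)\ge \OPT_{SC}(R)$ for every $R$. It follows that the worst-case Phase~2 cost is at most $H_n\,\OPT^W_{ASC}$, and the weighted-average Phase~2 cost is at most $H_n\,\OPT^A_{ASC}$.

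\textbf{Combining.} Phase 1 costs $\cO(\log k)\cdot\OPT_{RI}\le\cO(\log k)\cdot\OPT_{ASC}$ by the observation preceding the theorem. Adding the $\cO(\log n)$ factor from Phase 2 gives $\cO(\log k+\log n)=\cO(\log(kn))$ for both criteria simultaneously, using the same single tree.
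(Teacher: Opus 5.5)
\textbf{Gap in Phase~1 for non-uniform weights.} For uniform weights your argument is complete and coincides with the paper's: identify $R^*$ with Theorem~\ref{thm:main_algorithm}, then cover it greedily, using $\cost(D^*,R)\ge \OPT_{SC}(R)$ and $\OPT_{RI}\le \OPT_{ASC}$. Your check that distinct realizations are separable is a useful detail the paper omits. For arbitrary $w$, however, you flag the average-case bound as the main obstacle and never close it, and the sketched fix does not work.

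The weighted rerun of Theorem~\ref{thm:main_algorithm} splits by weight, so its depth is $\cO(\log(W/w_{\min}))$. This is exactly the $\cO(\log 1/p_{\min})$ barrier the paper itself points out in its conclusions. Your bucketing is meant to remove this dependence, but it fails:
\begin{itemize}
\item Light realizations cannot be dropped, because they still have to be identified and covered.
\item Their small total weight does not make their contribution small relative to the optimum, since their identification cost can be huge.
\item Rounding their weights up to $W/k^2$ can inflate the optimum unboundedly. Take one realization of weight about $W$ that a unit-cost test separates from all others, and $k-1$ realizations of weight $\varepsilon\to 0$ that can only be told apart by tests of cost $C$. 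The original optimum is about $W$, while the rounded one is at least $(k-1)\cdot\frac{W}{k^2}\cdot C\approx WC/k$.
\end{itemize}
Getting $\cO(\log k)$ rather than $\cO(\log(W/w_{\min}))$ for weighted identification is a known nontrivial result, not a rounding step.

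\textbf{Simultaneity.} Your Phase~1 builds two different trees: a cardinality-based one for the worst case and a weight-based one for the average case. The theorem, and your own ``same single tree'' in the combining step, require one strategy that is good for both. The weight-based tree alone also only yields $\cO(\log(W/w_{\min}))$ for the worst case.

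\textbf{How the paper avoids this.} It uses Theorem~\ref{thm:main_algorithm} only when the weights are uniform. Otherwise it invokes as a black box the algorithm of \cite{DiagnosisDeterminationDecisionTreesOptimizingSimultaneouslyWorstAndExpectedTestingCost}. For arbitrary weights, that algorithm outputs a single tree that is simultaneously an $\cO(\log k)$-approximation for both criteria. Replacing your weighted machinery with that citation completes your proof. Alternatively, combine a weighted average-case $\cO(\log k)$ algorithm with a worst-case one via \cite{TradingOffWorstAndExpectedCostInDecisionTreeProblems}, as the paper notes is possible. Your Phase~2 and final summation then go through unchanged.
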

\begin{proof}
    The algorithm is roughly inspired by the work on the \textsc{Adaptive TSP} \cite{ApproximationAlgorithmsForOptimalDecisionTreesAndAdaptiveTSPProblems}, where the idea was to firstly identify the hidden scenario and then to find a TSP tour on the vertices of the scenario. 
    We proceed in a similar manner, firstly we identify the hidden realization $R^*\in \cR$ and then find a cover of $R^*$. The identification phase is executed by reducing the problem to the \textsc{Optimal Decision Tree}, while the covering phase is achieved using the \textsc{Set Cover} with the universe being the identified realization $R^*$. We note, that here, if the weights are uniform we can use our algorithm for the ODT, otherwise we can use the algorithm of \cite{DiagnosisDeterminationDecisionTreesOptimizingSimultaneouslyWorstAndExpectedTestingCost} which works for arbitrary weights. Let us stress that both of these algorithms are bicriteria: the single decision tree they output is simultaneously an $\cO\br{\log n}$-approximation with respect to the average-case and with respect to the worst-case cost. In the covering phase, we use the greedy heuristic for the \textsc{Set Cover} of \cite{AGreedyHeuristicForTheSetCoveringProblem}.
    We now describe the algorithm in detail.
    \begin{enumerate}
        \item Phase 1, Identification: Run a (bicriteria) $\cO\br{\log \spr{\cH}}$-approximation algorithm for the ODT on the following instance $\cI_{ODT}$: the set of hypotheses is $\cH = \cR$ with the weight function $w$. For each set $S \in \cS$ create a test $t_S=\cP_S$ with $c\br{t_S} = c\br{S}$. Let $D_{ODT}$ be the resulting decision tree, where each $t_S$ is replaced by the corresponding set $S$.
        \item Phase 2, Covering: For every $R \in \cR$, run an $\cO\br{\log \spr{\cU}}$ approximation algorithm for the \textsc{Set Cover} on the instance $\cI_R=\br{\cU=R, \cS, c}$. Let $\cC_R$ be the returned solution. Then, replace the leaf of $D_{ODT}$ corresponding to $R$ by a sequence of internal nodes corresponding to the sets in $\cC_R$ (in an arbitrary order). Below the last node of this sequence hang a leaf corresponding to $R$.
        Let $D$ be the resulting decision tree.
        \item Return $D$.
    \end{enumerate}
    
    \begin{lemma}\label{lem:asc_identification}
        The decision tree $D_{ODT}$ is a feasible solution for the \textsc{Realization Identification}.
    \end{lemma}
    \begin{proof}
     Consider the identification process when solving the \textsc{Realization Identification} problem on $\cI$. Whenever a set/test $S \in \cS$ is chosen, all elements in $O_{S,R^*}$ become identified. This means that all of the realizations $R'\in \cR$ such that $O_{S, R'} \neq O_{S, R^*}$ are eliminated from the set of possible realizations. Observe, that this is indeed also true for the ODT instance we constructed, since the the set of hypotheses is $\cH = \cR$ and each test $t_S$ partitions $\cH$ into disjoint subsets $\cP_{t_S}$ in the exact same way as the corresponding set $S \in \cS$ partitions $\cR$ into disjoint subsets $\cP_S$. In other words, $\cP_{t_S}=\cP_S$. This means that the \textsc{Realization Identification} is a special case of the ODT and therefore, the decision tree $D_{ODT}$ is indeed a feasible solution for it.
    \end{proof}

    By the above lemma, and the approximation of the algorithm used in Phase 1, we get
    \[
    \COSTA\br{D_{ODT}} \leq \cO\br{\log k}\cdot \OPT^A\br{\cI_{ODT}} = \cO\br{\log k} \cdot \OPT_{RI}^{A}\br{\cI}\leq \cO\br{\log k}\cdot \OPT_{ASC}^{A}\br{\cI},
    \]
    and analogously 
    \[
    \COSTW\br{D_{ODT}} \leq \cO\br{\log k}\cdot \OPT^W\br{\cI_{ODT}} = \cO\br{\log k}\cdot \OPT_{RI}^{W}\br{\cI} \leq \cO\br{\log k}\cdot \OPT_{ASC}^{W}\br{\cI}.
    \]

    Now, we observe that whenever we proceed according to $D_{ODT}$ while trying to cover the hidden realization $R^* \in \cR$, when the leaf corresponding to $R^*$ is reached, $R^*$ is guaranteed to be identified. Therefore, by choosing next a sequence of sets covering $R^*$ we obtain a feasible solution $D$ for the \textsc{Adaptive Set Cover}. We now argue about the cost of this strategy.

    \begin{lemma}\label{lem:asc_cost}
        For the decision tree $D$ we have 
        \[\COSTA\br{D}\leq \cO\br{\log \br{k\cdot n}}\cdot \OPT_{ASC}^{A}\br{\cI}
        \quad\text{and}\quad
        \COSTW\br{D}\leq \cO\br{\log \br{k\cdot n}}\cdot \OPT_{ASC}^{W}\br{\cI}.
        \]
    \end{lemma}
    \begin{proof}
    The average-case cost of $D$ can be decomposed into two parts: the cost incurred in Phase 1 (identification) which is at most $\cO\br{\log k} \cdot \OPT_{ASC}^{A}\br{\cI}$ and the cost incurred in Phase 2 (covering). We reason about the cost of Phase 2. Let $D_A^*$ be an optimal solution for the average-case \textsc{Adaptive Set Cover}. For every realization $R \in \cR$, observe that $\pathbt_{D_A^*}\br{R}$ consists of nodes which are associated with a subfamily of sets which is a feasible solution for the \textsc{Set Cover} instance $\cI_R=\br{R, \cS, c}$. This means that $\OPT_{SC}\br{\cI_R} \leq \cost\br{D_A^*, R}$ and as a consequence $c\br{\cC_R}\leq \cO\br{\log n}\cdot \cost\br{D_A^*, R}$. Therefore, we get that
    \begin{align*}
        \COSTA\br{D} &= \sum_{R \in \cR} w\br{R}\cdot \cost\br{D, R} 
        \\
        &\leq \sum_{R \in \cR} w\br{R}\cdot \cost\br{D_{ODT}, R} + \sum_{R \in \cR} w\br{R}\cdot c\br{\cC_R} \\
        &\leq \sum_{R \in \cR} w\br{R}\cdot \cost\br{D_{ODT}, R} + \cO\br{\log n}\cdot \sum_{R \in \cR} w\br{R}\cdot \cost\br{D_A^*, R} \\
        &= \COSTA\br{D_{ODT}} + \cO\br{\log n}\cdot \COSTA\br{D_A^*} 
        \\&\leq \cO\br{\log k}\cdot \OPT_{ASC}^{A}\br{\cI} + \cO\br{\log n}\cdot \OPT_{ASC}^{A}\br{\cI} 
        \\&= \cO\br{\log \br{k\cdot n}}\cdot \OPT_{ASC}^{A}\br{\cI}.
    \end{align*}   
    Analogously, for a decision tree $D_W^*$, optimizing the worst-case \textsc{Adaptive Set Cover} 
    \begin{align*}
        \COSTW\br{D} &= \max_{R \in \cR} \brc{\cost\br{D, R}}
        \\
        &\leq \max_{R \in \cR} \brc{\cost\br{D_{ODT}, R}} + \max_{R \in \cR} \brc{c\br{\cC_R}} \\
        &\leq \COSTW\br{D_{ODT}} + \cO\br{\log n}\cdot \max_{R \in \cR} \brc{\cost\br{D_W^*, R}} \\
        &= \COSTW\br{D_{ODT}} + \cO\br{\log n}\cdot \COSTW\br{D_W^*} 
        \\&\leq \cO\br{\log k}\cdot \OPT_{ASC}^{W}\br{\cI} + \cO\br{\log n}\cdot \OPT_{ASC}^{W}\br{\cI} 
        \\&= \cO\br{\log \br{k\cdot n}}\cdot \OPT_{ASC}^{W}\br{\cI}.
    \end{align*}
    as required.
    \end{proof}

    Combining Lemma~\ref{lem:asc_identification} and \ref{lem:asc_cost} concludes the proof of Theorem~\ref{thm:asc}.
\end{proof}
\section{Conclusions}
We have presented a simple greedy algorithm for the \textsc{Optimal Decision Tree} problem that simultaneously achieves an $\cO\br{\log n}$-approximation for both the worst-case and average-case cost objectives. Our algorithm has a relatively small constant under the $\cO$-notation, which is below $3.65$ for the worst-case cost and twice as much for the average-case cost, assuming base $2$ logarithm. The current state-of-the-art algorithm for the average-case, uniform probabilities, non-uniform costs ODT achieves the approximation ratio of $4\cdot\ln n<5.78\cdot\log n$ \cite{MinimumCostAdaptiveSubmodularCover}, which is only slightly better than our algorithm. Similarly, for the worst-case objective with non-uniform costs, the currently best guarantee is $\log n$ \cite{TheCostComplexityOfInteractiveLearning} which is also not very far away from our result. The main disadvantage of both of the aforementioned procedures is that they were proven to achieve a logarithmic approximation only for only one of the two criterions. It is possible to transform them into a bicriteria guaranteeing strategy~\cite{TradingOffWorstAndExpectedCostInDecisionTreeProblems}, however there is no way to set up the constants in this construction in order to beat both of our approximation guarantees simultaneously.

We have also shown how to leverage this result in order to get an $\cO\br{\log \br{k\cdot n}}$-approximation for the \textsc{Adaptive Set Cover}. This is the first explicitly stated result for this problem as far as we are aware.
The advantage of our simple approach is that the dependence on both $k$ and $n$ in the approximation ratio is transparent: one logarithmic factor comes from identifying the hidden realization, while the other occurs since we need to cover all elements of the identified realization. Note, that if there is an upper bound on the maximum size of a realization, call it $r$, then the approximation ratio improves to $\cO\br{\log \br{k\cdot r}}$.
We leave as an open question whether one can improve this guarantee down to $\cO\br{\log r}$, or alternatively establish an inapproximability result excluding such a possibility.

Currently, our algorithm for the \textsc{Optimal Decision Tree} does not work for arbitrary probability distribution/weight function over the hypotheses, where the aim is to minimize the weighted average-case cost. 
A relatively easy argument certifying the approximation guarantee for this setting was presented in~\cite{ASimpleApproximationAlgorithmForOptimalDecisionTree}, however the analysis of their algorithm is still significantly more involved than the one presented in this work.
It is interesting whether a similarly simple approach together with an appropriate analysis can be designed for this setting as well. Note, that even if there existed a bicriteria constant-factor approximation algorithm for the generalization of the \textsc{Separating Subfamily} problem, where each element has a weight, our approach would only yield an $O\br{\log 1/p_{\min}}$-approximation for the average-case cost ODT, where $p_{\min}$ is the minimum probability among all of the hypotheses.

One could also ask, whether it is possible to give a true, single-criteria, constant-factor approximation for the \textsc{Separating Subfamily} which is an interesting problem by itself. We answer in the negative below, by showing that the problem generalizes the \textsc{Set Cover}. A hardness result can also be provided by reducing from the \textsc{Maximum Coverage}, which shows that obtaining a $\br{1, 1+1/e-\epsilon}$-approximation guarantee is hard as well.

\begin{theorem}
  There is no $\br{\br{1-o\br{1}}\cdot\ln\br{n/2}, 1}$ nor $\br{1, 1+1/e-\epsilon}$-approximation algorithm, for the \textsc{Separating Subfamily} for any $\epsilon>0$, unless $\textup{P}=\textup{NP}$.
\end{theorem}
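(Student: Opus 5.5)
We prove both hardness claims by reduction, one for each bicriteria pair.

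\textbf{Part 1: no $\br{\br{1-o\br{1}}\cdot\ln\br{n/2},1}$-approximation.} We reduce from \textsc{Set Cover}. Take a \textsc{Set Cover} instance with universe $X=\{x_1,\dots,x_N\}$ and family $\cS$ with costs $c$. Build a \textsc{Separating Subfamily} instance on the universe $\cU=X\cup\{y_1,\dots,y_N\}$, so $n=2N$.
\begin{itemize}
\item For each $S\in\cS$ create the partition $\cP_S$, whose parts are the pairs $\{x_i,y_i\}$ for $x_i\notin S$ and the singletons $\{x_i\},\{y_i\}$ for $x_i\in S$. Give $\cP_S$ cost $c(S)$.
\item Add the partition $\cP_0$ whose parts are $X$ and $\{y_1,\dots,y_N\}$. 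Give it cost $0$, or if costs must be natural, a cost negligible relative to everything else after scaling.
\item Set $f=1/n$, so a feasible family must separate all pairs.
\end{itemize}
Since $\cP_0$ is free we may assume it is always chosen. It already separates every pair except those of the form $\{x_i,y_i\}$, and $\{x_i,y_i\}$ is separated by $\cP_S$ exactly when $x_i\in S$. Hence feasible families containing $\cP_0$ correspond to set covers, with equal cost. The requirement $\cP_\cT=$ singletons holds because $\cS$ covers $X$. Therefore an $\br{\br{1-o(1)}\ln(n/2),1}$-approximation would give a $\br{1-o(1)}\ln N$-approximation for \textsc{Set Cover}, contradicting \cite{AnalyticalApproachToParallelRepetition}.

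\textbf{Part 2: no $\br{1,1+1/e-\epsilon}$-approximation.} We use the known gap form of \textsc{Maximum Coverage} hardness with uniform costs. It is NP-hard to distinguish two cases: either $B$ sets cover the universe $X$ with $|X|=N$ exactly, or every $B$ sets cover at most $\br{1-1/e+\epsilon'}N$ elements. The plan is to set $f$ so that the YES case gives parts of size at most $fn$, while in the NO case every $B$ sets leave some part of size greater than $\br{1+1/e-\epsilon}fn$.

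A natural first attempt reuses the gadget from Part 1 with $f=2/n$. Covering $x_i$ then splits $\{x_i,y_i\}$, and uncovered elements leave pairs of size $2$. This gives only a factor-$2$ violation, not one tied to the uncovered fraction, so it fails to encode the coverage fraction as a part size. The gadget must instead make the largest part of $\cP_\cF$ have size proportional to $1+(\text{fraction uncovered})$.

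The construction I would try uses a universe consisting of a block $Y$ of size $N$ together with $X$. For each set $S$, the partition $\cP_S$ has parts $\{x\}$ for $x\in S$ and one part $Y\cup(X\setminus S)$. The parts of $\cP_\cF$ are then $\{x\}$ for covered $x$, plus one big part $Y\cup(\text{uncovered elements})$.
\begin{itemize}
\item In the YES case the big part has size $N$.
\item In the NO case it has size at least $N\br{1+1/e-\epsilon'}$.
\end{itemize}
Taking $f=N/|\cU|=1/2$, any algorithm paying at most $\OPT=B$ (a $\br{1,\beta}$-guarantee with uniform costs) would achieve violation $\beta<1+1/e-\epsilon$ only in the YES case, which distinguishes the two cases. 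Singleton identifiability of $\cP_\cT$ requires also separating the points of $Y$ from one another. I would handle this with auxiliary partitions of prohibitively large cost that split $Y$, arguing that they are never used within budget $B$.

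\textbf{Main obstacle.} The delicate step is making the cost side exact: the $(1,\cdot)$ guarantee forbids exceeding the budget even slightly. This requires the uniform-cost gap instance, costs scaled so that the auxiliary partitions exceed $B$, and checking that the optimum equals $B$ in the YES case.
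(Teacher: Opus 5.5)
\textbf{Part~1 fails as written.} The free partition $\cP_0$ with parts $X$ and $Y$ does not do what you claim. It separates every cross pair $\{x_i,y_j\}$, in particular every pair $\{x_i,y_i\}$, and it separates no pair inside $X$ or inside $Y$.

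Each part of a single $\cP_S$ contains at most one element of $X$ and at most one of $Y$. So $\cP_S$ already separates all pairs inside $X$ and inside $Y$. Hence $\{\cP_0,\cP_S\}$ is feasible for $f=1/n$ for every $S$, which gives $\OPT_{SF}=\min_S c(S)$ (up to the cost of $\cP_0$). The instance then carries no \textsc{Set Cover} information. Your description of what $\cP_0$ separates fits the pairing partition $\{\{x_1,y_1\},\dots,\{x_N,y_N\}\}$, not $\{X,Y\}$.

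The repair is to drop $\cP_0$ entirely. The $\cP_S$ alone separate every pair except the $\{x_i,y_i\}$, and $\{x_i,y_i\}$ is separated exactly when some chosen $S$ contains $x_i$. So with $f=1/n$ the feasible families are precisely the set covers, at equal cost. Moreover $\cP_{\cT}$ is the singleton partition because $\cS$ covers $X$. With this fix Part~1 is correct and gives $\ln(n/2)=\ln N$ as required.

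\textbf{Comparison with the paper, Part~1.} The paper keeps $f=1/2$. It adds a dummy block $[\spr{\cU}]$, uses two-part partitions $\{S,\cU'\setminus S\}$ of cost $1$, and adds infinite-cost singleton-splitting partitions. Feasibility then forces the dummy block to be a part by itself, i.e.\ the chosen sets cover $\cU$. Your repaired gadget needs no auxiliary or infinite-cost partitions. However, it only shows hardness at $f=1/n$, whereas the paper's shows it already at $f=1/2$, the value its ODT algorithm uses.

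\textbf{Comparison with the paper, Part~2.} The paper reuses its gadget, sweeps $f$ over multiples of $1/\spr{\cU'}$, and converts a $(1,\beta)$-algorithm into a $(2-\beta)$-approximation for \textsc{Maximum Coverage}. You instead fix $f=1/2$ and reduce from the gap version. Covered elements become singletons, so the only large part is $Y$ together with the uncovered elements.

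Your route is the more robust one. The paper's final step $\frac{2(1-\beta)}{x^*}+\beta\geq 2-\beta$ requires $1/x^*\le 1$, because $1-\beta<0$. It is therefore valid only when $x^*=1$, which is exactly the YES case of the gap problem you use.

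To finish Part~2 cleanly:
\begin{itemize}
\item Take $\epsilon'<\epsilon$.
\item Give the $Y$-splitting partitions cost $B+1$.
\item Let the distinguisher accept iff the returned family has cost at most $B$ and largest part at most $\beta N$.
\end{itemize}
The cost check is needed because in a NO instance $\OPT_{SF}$ may exceed $B$, so the algorithm is free to return a more expensive family.
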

\begin{proof}
Let $\cI_{SC}=\br{\cU, \cS}$ be a uniform-cost \textsc{Set Cover} instance. We construct a \textsc{Separating Subfamily} instance $\cI_{SF}=\br{\cU', \cT, c, f}$ as follows. We set $\cU'=\cU\cup [\spr{\cU}]$.
For each $u\in \cU'$ we create a partition $\cP_u=\brc{\brc{u}, \cU'\setminus \{u\}}$ with $c\br{\cP_u}=\infty$. For each $S\in \cS$ we create a partition $\cP_S=\brc{S, \cU'\setminus S}$ with $c\br{\cP_S}=1$. We set $f=1/2$. Since for any $u, v\in \cU'$ there exists $\cP\in \cT$ separating $u$ and $v$, the instance is feasible. 
We claim that any solution to the \textsc{Separating Subfamily} instance $\cI_{SF}$ corresponds to a solution to the \textsc{Set Cover} instance $\cI_{SC}$ with equal cost, and vice versa.

Firstly, we assume that for any sensible solution $\cF$ to the \textsc{Separating Subfamily} instance $\cI_{SF}$, $\cF\subseteq \brc{\cP_S\colon S\in \cS}$, otherwise $c\br{\cF}=\infty$. Since no other $\cP\in \cT$ can separate members of $[\spr{\cU}]$ amongst themselves, $\cP_{\cF}$ must separate all elements in $\cU$ from $[\spr{\cU}]$. To see that the corresponding family of sets $\cC=\brc{S\in \cS\colon \cP_S\in \cF}$ is a cover of $\cU$, observe that if
there exists $u\in \cU$ such that $u\notin \cov\br{\cC}$, then $u$ is not separated from $[\spr{\cU}]$ by~$\cP_{\cF}$, a contradiction. 
Conversely, for any $S\in \cS$ and $u\in S$, let $u\in P_u^S\in \cP_S$. Then, $\spr{P_u^S}\leq\spr{\cU}=\spr{\cU'}/2$. This means that $\cF$ is a feasible solution for the $\cI_{SF}$ instance, if $\cC$ is a cover. This is because by definition of a cover and the previous discussion, for every $u\in \cU$, where $u\in P_u'\in \cP_{\cF}$, we have $P_u'\subseteq P_u^S$ for some $S\in \cC$. Therefore, $\spr{P_u'}\leq \spr{P_u}\leq \spr{\cU}=\spr{\cU'}/2$ and $\cF$ is a feasible solution for the \textsc{Separating Subfamily} instance $\cI_{SF}$.
Since the costs of both corresponding solutions are equal and $\spr{\cU'}=2\cdot \spr{\cU}$, if the \textsc{Separating Subfamily} admits an algorithm with an approximation of
\[
\br{1-o\br{1}}\cdot \ln \br{n/2}=\br{1-o\br{1}}\cdot \ln \br{2\cdot\spr{\cU}/2} = \br{1-o\br{1}}\cdot \ln \spr{\cU}
\]
we immediately obtain an $\br{1-o\br{1}}\cdot \ln \spr{\cU}$-approximation for the \textsc{Set Cover}.

The algorithm used to reduce from a uniform-cost \textsc{Maximum Coverage} instance $\cI_{MC}=\brc{\cU, \cS, B}$ to the \textsc{Separating Subfamily} is the same, however, instead of fixing $f=1/2$ we test several values of $f$. The procedure is as follows:
\begin{enumerate}
    \item Guess a value $f$ by starting with $f=1/2$ and iteratively increase it by $\frac{1}{\spr{\cU'}}$, until $f=1$.
    \item For each value of $f$, run the $\br{1, \beta}$-approximation algorithm for the \textsc{Separating Subfamily} on instance $\cI_{SF}^{f}=\br{\cU', \cT, c, f}$ to obtain a subfamily of partitions~$\cF_f$. Let $\cC_{f}=\brc{S\in \cS\colon \cP_{S}\in \cF_f}$ be the corresponding cover.
    \item Return $\argmax_{\cC_{f}, |\cC_{f}|\leq B} \brc{\spr{\cov\br{\cC_{f}}}}$.
\end{enumerate}

Clearly, the number of iterations is $O\br{\spr{\cU}}$ so the reduction runs in polynomial time.

Let
\[
x^*=\frac{\OPT_{MC}\br{\cI_{MC}}}{\spr{\cU}}
\]
be the fraction of elements covered by an optimal solution to the \textsc{Maximum Coverage} instance $\cI_{MC}$. Note that $x^*\in (0,1]$, since otherwise the instance is trivial.

Let $f^*=1-\frac{x^*}{2}$ and $\cC^*$ be an optimal solution for $\cI_{MC}$, $\cF^*=\brc{\cP_S\colon S\in \cC^*}$ be the corresponding subfamily, and $\cU_{r}=\cU\setminus \cov\br{\cC^*}$ be the elements of $\cU$ not covered by $\cC^*$. Every $u\in \cov\br{\cC^*}$ belongs to a part $P_u\in \cP_{\cF^*}$, such that $\spr{P_u}\leq\spr{\cU'}/2\leq f^*\cdot \spr{\cU'}$ and $P_u\cap [\spr{\cU}]=\emptyset$. Consider the part $P_r\in \cP_{\cF^*}$, where $P_r=\cU_{r}\cup [\spr{\cU}]$. We have 
\[
\spr{P_r}=\spr{\cU_{r}}+\spr{\cU}=\spr{\cU}-\spr{\cov\br{\cC^*}}+\spr{\cU}=2\spr{\cU}-\OPT_{MC}\br{\cI_{MC}}=2\spr{\cU}-x^*\cdot \spr{\cU}=f^*\cdot \spr{\cU'}.
\]
Therefore, $\cF^*$ is a feasible solution for the \textsc{Separating Subfamily} instance $\cI_{SF}^{f^*}$ and $c\br{\cF^*} = B$. We get that $\OPT_{SF}\br{\cI_{SF}^{f^*}}\leq c\br{\cF^*}= B$. Let $\cF_{f^*}$ be the solution returned by a $\br{1, \beta}$-approximation algorithm for the \textsc{Separating Subfamily} on instance $\cI_{SF}^{f^*}$. By the previous discussion we get that the corresponding cover $\cC_{f^*}=\brc{S\in \cS\colon \cP_S\in \cF_{f^*}}$ is a feasible solution for the \textsc{Maximum Coverage} instance $\cI_{MC}$, since $c\br{\cC_{f^*}}=\OPT_{SF}\br{\cI_{SF}^{f^*}}\leq B$.
Define $P=\argmax_{P\in \cP_{\cF_{f^*}}}\brc{\spr{P}}$ to be the largest part in $\cP_{\cF_{f^*}}$, where in case of a tie we pick the part containing $[\spr{\cU}]$. Since $[\spr{\cU}]\subseteq P$, we get $\cov\br{\cC_{f^*}}=\cU\setminus P=\cU'\setminus P$. Define
\[
f=\frac{\spr{P}}{\spr{\cU'}},
\]
to be the fraction of elements in $\cU$ that belong to $P$. Let $x$ be such that $f=1-\frac{x}{2}$. We have 
\[
\spr{\cov\br{\cC_{f^*}}}=\spr{\cU'\setminus P}=\spr{\cU'}-\spr{P}=\spr{\cU'}-f\cdot \spr{\cU'}=\frac{x}{2}\cdot \spr{\cU'}=x\cdot \spr{\cU}.\]
Moreover, by definition of $\beta$ we have that $f\leq\beta\cdot f^*$, and thus we get
\[
1-\frac{x}{2}\leq \beta\cdot \br{1-\frac{x^*}{2}}.
\]
By rearranging the terms, we have $x\geq 2\cdot \br{1-\beta}+\beta \cdot x^*$. Dividing both sides by $x^*$, yields the following approximation guarantee for the \textsc{Maximum Coverage} instance $\cI_{MC}$:
\[
\frac{\spr{\cov\br{\cC_{f^*}}}}{\OPT_{MC}\br{\cI_{MC}}}=\frac{x\cdot \spr{\cU}}{x^*\cdot \spr{\cU}}=\frac{x}{x^*}\geq \frac{2\cdot\br{1-\beta}}{x^*}+\beta\geq 2-\beta,
\]
where the last inequality follows by the fact that $x^*\leq 1$.

Thus, we get an $2-\beta$-approximate solution for the \textsc{Maximum Coverage}. 
Assume, that we have an $\br{1, \beta = 1+1/e-\epsilon}$-approximation for the \textsc{Separating Subfamily} for some $\epsilon>0$. Then, we get an $2-\beta=1-1/e+\epsilon$-approximation for the \textsc{Maximum Coverage}.

Since both of the guarantees provided by the above reductions are impossible to obtain unless $\Pclass=\NP$ \cite{TheBudgetedMaximumCoverageProblem,AnalyticalApproachToParallelRepetition}, the claim follows.
\end{proof}

\bibliographystyle{plain}
\bibliography{references}

\end{document}